\documentclass[pra,
twocolumn,
superscriptaddress,
preprintnumbers,amsmath,amssymb]{revtex4}

\usepackage{graphicx}
\usepackage[caption=false]{subfig}
\usepackage{booktabs}
\usepackage{diagbox}
\usepackage{amsthm}
\usepackage{tensor}
\usepackage{color}
\usepackage[all]{xy}
\usepackage{tikz}
\usepackage{dsfont}
\usepackage{times,txfonts}
\usetikzlibrary{positioning}
\usepackage{braket}
\usepackage{mathtools}
\usepackage[T1]{fontenc}
\usepackage{enumitem}
\usepackage{xspace}
\usepackage{bm}
\usepackage{balance}
\usepackage{hyperref}

\setlist[enumerate]{label=(\roman*),align=left,labelsep=.5em,leftmargin=*,widest=ix,itemsep=0pt,parsep=0pt,topsep=.3\baselineskip}

\newtheorem{Thm}{Theorem}
\newtheorem{Lem}[Thm]{Lemma}
\newtheorem{Prop}[Thm]{Proposition}
\newtheorem{Cor}[Thm]{Corollary}
\theoremstyle{definition}
\newtheorem{Def}[Thm]{Definition}
\newtheorem{Rem}[Thm]{Remark}
\newtheorem{Example}[Thm]{Example}

\newcommand{\rank}{\operatorname{rank}}
\newcommand{\Tr}{\operatorname{Tr}}

\newcommand{\M}{\mathbb{M}}
\newcommand{\SN}{\mathrm{SN}}
\newcommand{\SR}{{\mathrm{SR}}}
\newcommand{\APPT}{\mathrm{APPT}}
\newcommand{\ARED}{\mathrm{ARED}}
\newcommand{\RED}{\mathrm{RED}}

\newcommand{\ASN}{\mathrm{ASN}}

\newcommand{\BP}{\mathrm{BP}}
\newcommand{\LS}{\mathrm{LS}}
\newcommand{\C}{\mathbb{C}}

\newcommand{\spec}{\operatorname{spec}}

\newcommand{\ox}{\otimes}

\newcommand{\vket}[1]{|#1\rangle\!\rangle}
\newcommand{\vbra}[1]{\langle\!\langle #1|}
\newcommand{\norm}[1]{\left\lVert #1\right\rVert}

\newcommand{\proj}[1]{\ket{#1}\!\bra{#1}}

\begin{document}

\title{A threshold for maximal Schmidt number from spectrum}
\author{Junhyeong An}
\email{hsj0356@khu.ac.kr}
\affiliation{Department of Mathematics and Research Institute for Basic Sciences, Kyung Hee University, Seoul 02447, Korea}
\author{Soojoon Lee}
\email{level@khu.ac.kr}
\affiliation{Department of Mathematics and Research Institute for Basic Sciences, Kyung Hee University, Seoul 02447, Korea}

\date{\today}

\begin{abstract}
The Schmidt number quantifies the dimensionality of entanglement in bipartite quantum states.
We investigate when 
the spectrum of a state on $\C^d \otimes \C^d$ alone guarantees that its Schmidt number is not maximal.
By deriving spectral bounds for $(d-1)$-block positive operators, we prove that $\LS_p\subseteq \ASN_{d-1}$ for $p\leq \lfloor d^2/2\rfloor$, where $\LS_p$ denotes the set of states whose largest eigenvalue does not exceed the sum of their $p$ smallest eigenvalues, and $\ASN_{d-1}$ denotes the set of states whose Schmidt number remains at most $d-1$ under all global unitaries.
This also yields a simple sufficient condition involving only the largest eigenvalue.
As an application, we show that states satisfying the reduction criterion under arbitrary global unitaries belong to $\ASN_{d-1}$ in every dimension. This subsumes the earlier result that absolutely positive partial transpose states belong to $\ASN_{d-1}$. The global unitary requirement is nevertheless essential:
for every $d\geq3$, we construct states of maximal Schmidt number satisfying the reduction criterion on both subsystems.
\end{abstract}

\maketitle

\section{Introduction}

Entanglement is one of the central resources in quantum information theory, and therefore not only detecting entanglement but also characterizing its structure is a fundamental problem~\cite{HorodeckiEtAl2009,GuhneToth2009}. 
An important aspect of this structure is the dimensionality of entanglement, which is naturally quantified by the Schmidt number: the smallest integer $k$ such that a bipartite state can be written as a convex combination of pure states of Schmidt rank at most $k$~\cite{TerhalHorodecki2000,SanperaBrussLewenstein2001}.
A larger Schmidt number can provide genuine advantages in several quantum information tasks, including channel discrimination and high-dimensional quantum communication~\cite{BaeChruscinskiPiani2019,CozzolinoDaLioBaccoOxenlowe2019,ErhardKrennZeilinger2020}.

For a bipartite system $\C^d\otimes\C^d$, let $\SN_k$ denote the convex set of states with Schmidt number at most $k$.
Since $\SN_d$ is the whole state space, the last nontrivial set is $\SN_{d-1}$, and every state outside $\SN_{d-1}$ has maximal Schmidt number $d$.
Thus, understanding maximal Schmidt number is equivalent to understanding 
the set $\SN_{d-1}$.

Since $\SN_{d-1}$ is convex, the Hahn--Banach separation theorem allows every state outside this set to be separated by a hyperplane.
In entanglement theory, such separating Hermitian operators are called Schmidt number witnesses~\cite{TerhalHorodecki2000,SanperaBrussLewenstein2001,ChruscinskiSarbicki2014}.
Accordingly, the dual objects associated with $\SN_{d-1}$ are the $(d-1)$-block positive operators denoted as $\BP_{d-1}$.

However, a complete characterization of $\BP_{d-1}$ is difficult, so directly determining all relevant Schmidt number witnesses is also difficult~\cite{Sarbicki2008,JohnstonKribs2010,JohnstonPatterson2018}.
On the other hand, spectral information is much simpler to handle and is independent of the choice of eigenbasis.
This motivates us to ask whether the spectral information of a state alone can guarantee that its Schmidt number is not maximal.

The spectrum of a state does not determine its entanglement, since global unitary transformations preserve the spectrum while changing the eigenvectors and, in general, the Schmidt number \cite{KusZyczkowski2001,IshizakaHiroshima2000,VerstraeteEtAl2001}.
Recently, the notion of Schmidt number from spectrum has been the subject of active research.
A state $\rho$ belongs to $\ASN_k$ if $\SN(U\rho U^\dagger)\le k$ for every global unitary $U$~\cite{AbellanetVidalEtAl2026,MallickEtAl2026}.
Thus, $\ASN_k$ consists precisely of states whose spectra cannot generate Schmidt number larger than $k$ under any change of eigenbasis.

To study such spectral conditions, we consider the family $\LS_p$ introduced in the study of positive reduction from spectrum~\cite{Jivulescu2015}.
$\LS_p$ consists of spectra for which the largest eigenvalue is no larger than the sum of the $p$ smallest eigenvalues, and these sets form an increasing hierarchy as $p$ increases.
We here prove that $\LS_p\subseteq \ASN_{d-1}$ whenever $p\le \lfloor d^2/2\rfloor.$

As a direct consequence, we also obtain a simple sufficient condition involving only the largest eigenvalue.
This provides an easily computable criterion: the largest eigenvalue of a state alone suffices to guarantee that its Schmidt number is not maximal, without requiring the full spectrum.

Among entanglement criteria arising from positive maps, two representative ones are the positive partial transpose (PPT) criterion and the reduction (RED) criterion~\cite{Peres1996,HorodeckiEtAl1996,HorodeckiReduction1999,cerf1999reduction}. 
Both provide simple necessary conditions for separability, and the reduction criterion is known to be weaker than the PPT criterion~\cite{HorodeckiReduction1999,cerf1999reduction}.
Their absolute versions are defined analogously to $\ASN_k$: APPT consists of states whose every global unitary image is PPT, while ARED is defined using the reduction criterion~\cite{Hildebrand2007PPT,Jivulescu2015}. 
It is known that $\APPT \subseteq \LS_3$ and $\ARED \subseteq \LS_{2d-1}$~\cite{Jivulescu2015}. 
Since $\lfloor d^2/2 \rfloor$ grows quadratically in $d$, whereas these known bounds are constant and linear in $d$, the range of $p$ covered by our result extends far beyond them as the local dimension grows.


It is already known that APPT states cannot have maximal Schmidt number~\cite{HuberLamiLancienMullerHermes2018}.
Since $\mathrm{ARED}\subseteq\mathrm{LS}_{2d-1}$, our $\mathrm{LS}_p$ threshold, together with an additional spectral estimate in the two-qutrit case, extends this conclusion to absolute reduction positivity and yields $\ARED\subseteq\ASN_{d-1}.$
This raises the stronger question of whether the reduction criterion alone, without the absoluteness requirement, guarantees that the Schmidt number of a state is not maximal.

For PPT states, whether $\mathrm{PPT}\subseteq\SN_{d-1}$ holds for general $d$ remains an important open question, although PPT states with large Schmidt number exist~\cite{HuberLamiLancienMullerHermes2018,johnston2026ppt}.
Since the reduction criterion is weaker than the PPT criterion, 
it is natural to
expect a negative answer in the reduction case, and we confirm this: for every $d \ge 3$, we construct states satisfying the reduction criterion on both subsystems that nevertheless have maximal Schmidt number. Hence
the absoluteness requirement is essential for turning the reduction criterion into a spectral obstruction to maximal Schmidt number.

The remainder of this paper is organized as follows.
In Sec.~\ref{sec:preli},  we review the basic definitions and introduce the mathematical tools used throughout the paper.
In Sec.~\ref{sec:witness}, we derive two complementary spectral estimates for $(d-1)$-block positive operators: a partial sum bound and a product bound for paired positive eigenvalues.
In Sec.~\ref{sec:SNFS}, we use the partial sum estimate to establish a general spectral criterion excluding maximal Schmidt number.
In Sec.~\ref{sec:application}, as an application, we examine whether the reduction criterion is compatible with maximal Schmidt number. 
We show that absolute reduction positivity excludes maximal Schmidt number in every dimension, whereas ordinary reduction positivity, even on both subsystems, does not.
Finally, in Sec.~\ref{sec:discussion},  we summarize the implications of our results and discuss remaining questions.

\section{Preliminaries}\label{sec:preli}

We first recall the notations and basic notions used throughout the paper.
Let $\M_d$ be the set of $d \times d$ complex matrices.
Then we can identify $\C^d\otimes\C^d$ with $\M_d$ by vectorization,
\[
X=(x_{ij})\in \M_d
\longleftrightarrow
\vket{X}
:=
\sum_{i,j}x_{ij}\ket{ij}\in\C^d\otimes \C^d,
\]
so that $\SR(\vket{X})=\rank X.$
The Schmidt number of a bipartite state $\rho\in \M_d\otimes \M_d$ is defined by~\cite{TerhalHorodecki2000}
\[
\SN(\rho)
=
\min_{\rho=\sum_i \proj{\psi_i}}
\max_i \SR(\ket{\psi_i}).
\]

For $1\le k\le d$, let $\SN_k$ be the closed convex cone generated by pure states of Schmidt rank at most $k$.
To detect whether a state lies outside $\SN_k$, one uses $k$-block positive operators, which form the dual cone of $\SN_k$:
\[
\begin{aligned}
\BP_k
=
\{
W: W=W^\dagger,
\langle\psi|W|\psi\rangle\ge0\text{ for all }\ket{\psi}\bra{\psi}\in\SN_k\}.
\end{aligned}
\]
Consequently,
\[
\SN(\rho)\le k
\Longleftrightarrow
\Tr(W\rho)\ge0,
\text{ }\forall W\in\BP_k.
\]
In particular, states of maximal Schmidt number is detected by $(d-1)$-block positive operators.

Our aim is to obtain conditions on the spectrum of a state guaranteeing that its Schmidt number is not maximal. 
Thus, we need the following definition~\cite{MallickEtAl2026,AbellanetVidalEtAl2026}.

\begin{Def}[Schmidt number from spectrum]
A state $\rho\in\M_d\otimes\M_d$ belongs to $\ASN_k$ if
\[
\SN(U\rho U^\dagger)\le k
\]
for every unitary $U$ on $\C^d\otimes\C^d$.
\end{Def}

Ref.~\cite{AbellanetVidalEtAl2026} provides the following spectral characterization of $\ASN_k$:
\begin{equation}
\rho\in\ASN_k
\Longleftrightarrow
\sum_{i=1}^{d^2}\lambda_i^\downarrow(\rho)\mu_i^\uparrow(W)\ge0,
\forall W\in \BP_k
\label{eq:ASN_k}    
\end{equation}

where $\lambda_i^\downarrow(\rho)$ is the $i$-th eigenvalue of $\rho$ arranged in decreasing order and $\mu_i^\uparrow(W)$ is the $i$-th eigenvalue of $W$ arranged in increasing order.

To investigate $\ASN_{d-1}$, we use the classes $\LS_p$ introduced in Ref.~\cite{Jivulescu2015}, which are defined as
\[
\rho\in\LS_p\quad\Longleftrightarrow\quad
\lambda_1^\downarrow(\rho)\le
\sum_{i=d^2-p+1}^{d^2}\lambda_i^\downarrow(\rho).
\]

We next recall the reduction map $R(X) = \mathrm{Tr}(X) I_d - X$.
We write
$\rho \in \mathrm{RED}_A$ if and only if $(R \otimes \mathrm{id})(\rho) \ge 0$, and $\rho \in \mathrm{RED}_B$ if and only if $(\mathrm{id} \otimes R)(\rho) \ge 0$,
where $(\mathrm{id} \otimes R)(\rho) = \rho_A \otimes I_d - \rho$ and $(R \otimes \mathrm{id})(\rho) = I_d \otimes \rho_B - \rho$.

Let $S$ be the swap operator.
Since $S(\mathrm{id} \otimes R)(X)S^\dagger = (R \otimes \mathrm{id})(SXS^\dagger)$, we have $\rho \in \mathrm{RED}_B$ if and only if $S\rho S^\dagger \in \mathrm{RED}_A$. 
As $S$ is a global unitary, the two reduction criteria from spectrum obtained from $R \otimes \mathrm{id}$ and $\mathrm{id}\otimes R$ coincide, so for absolute reduction positivity, it suffices to consider one side as in the following definition~\cite{Jivulescu2015}.

\begin{Def}[Positive reduction from spectrum]
A state $\rho\in\M_d\otimes\M_d$ belongs to $\ARED$ if
\[
(\operatorname{id}\otimes R)(U\rho U^\dagger)\ge0
\]
for every global unitary $U$.
\end{Def}

Moreover, Ref~\cite{Jivulescu2015} provides the spectral characterization of ARED.
For any probability vector $x=(x_1,x_2,\ldots, x_d)$ define $\ket{\psi_x}=\sum_i\sqrt{x_i}\ket{ii}$ and  $W_x=(\operatorname{id}\ox R)\ket{\psi_x}\bra{\psi_x}$.
Then 
\[
\rho\in\ARED
\Longleftrightarrow
\sum_{i=1}^{d^2}\lambda_i^\downarrow(\rho)\mu_i^\uparrow(W_x)
\ge0, \forall x.
\]

\section{Spectral properties of $(d-1)$-block positive operators}\label{sec:witness}

We now study the spectral properties of $(d-1)$-block positive operators, which are the witnesses relevant to maximal Schmidt number.
Let $W\in\BP_{d-1}$ and assume $W\ngeq0$.
It is known that a $k$-block positive operator on $\C^d \otimes \C^d$ has at most $(d-k)^2$ negative eigenvalues~\cite{Sarbicki2008,JohnstonKribs2010}; hence $W$ has exactly one negative eigenvalue, and the corresponding eigenvector clearly has Schmidt rank $d$.

For our spectral analysis, we also need to exclude zero eigenvalues.
The following theorem shows that every non-positive $(d-1)$-block positive operator is necessarily invertible.

\begin{Thm}\label{thm:bp-inertia}
Let $W\in\BP_{d-1}$ and assume $W\ngeq0$.
Then $W$ has no zero eigenvalue.
Equivalently, $W$ is invertible.
\end{Thm}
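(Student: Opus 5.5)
We argue by contradiction, using only the fact recalled above that a non-positive $W\in\BP_{d-1}$ has exactly one negative eigenvalue $\mu<0$, whose eigenvector $\ket{v}=\vket{V}$ has Schmidt rank $d$, i.e. $V$ is invertible. Suppose $W$ also has a zero eigenvalue, with unit eigenvector $\ket{z}=\vket{Z}\perp\ket{v}$. For any vector $\ket{\psi}=a\ket{v}+b\ket{z}+c\ket{y}$ with $\ket{y}$ orthogonal to both $\ket{v}$ and $\ket{z}$, all cross terms vanish because $\ket{v}$ and $\ket{z}$ are eigenvectors of $W$. Hence
\[
\langle\psi|W|\psi\rangle=\mu|a|^2+|c|^2\langle y|W|y\rangle .
\]
The goal is to find such a $\ket{\psi}$ with $a=1$, Schmidt rank at most $d-1$ (i.e. $\det(V+bZ+cY)=0$), and $|c|$ small enough that $\langle\psi|W|\psi\rangle<0$. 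This contradicts $W\in\BP_{d-1}$.

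\textbf{Step 1 (the generic case).} Take $c=0$. Then
\[
\det(V+bZ)=\det V\,\det(I+bN),\qquad N:=V^{-1}Z .
\]
If $N$ has a nonzero eigenvalue $\lambda$, set $b=-1/\lambda$. The vector $\ket{\psi}=\ket{v}+b\ket{z}$ then has Schmidt rank $<d$ and satisfies $\langle\psi|W|\psi\rangle=\mu<0$, which is a contradiction. Since $Z\neq0$, we have $N\neq0$, so the only remaining case is that $N$ is nonzero and nilpotent. In that case $\det(I+bN)=1$ for all $b$, and the pencil $V+bZ$ never drops rank.

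\textbf{Step 2 (the nilpotent case, which is the main obstacle).} Here a third direction $\ket{y}$ is used, with the parameter $b=t$ taken large. Choose a basis in which $N$ is in Jordan form, so that $N$ has an entry $N_{12}=1$ in some basis given by an invertible $P$. Set $M=PE_{21}P^{-1}$ and $Y=VM$. On that $2\times2$ block, $I+tN+sM$ looks like $\begin{pmatrix}1&t\\ s&1\end{pmatrix}$, up to triangular contributions from the rest of the Jordan structure. One then checks that $\det(I+tN+sM)$ has a root $s\approx 1/t$, which tends to $0$ as $t\to\infty$.

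\textbf{Step 3 (orthogonality and the size of $c$).} The vector $\vket{Y}$ need not be orthogonal to $\ket{v}$ and $\ket{z}$. Projecting it out replaces $M$ by $M-\alpha I-\beta N$. This only changes the pencil to $(1-\alpha s)I+(t-\beta s)N+sM$, and a root $s=O(1/t)$ still exists for $t$ large. Let $\ket{y}$ be the normalized projected vector. The corresponding $\ket{\psi}=\ket{v}+t\ket{z}+s\ket{y}$ has Schmidt rank $\le d-1$ and
\[
\langle\psi|W|\psi\rangle\le\mu+s^2\|W\|\,\|\vket{Y}\|^2=\mu+O(t^{-2}),
\]
which is negative for large $t$. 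This is the desired contradiction.

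The delicate point is making the root computation in Step 2 rigorous for a general nilpotent $N$, since the Jordan blocks may have size larger than $2$. I would handle this by expanding $\det(I+tN+sM)$ to first order in $s$: the coefficient of $s$ is $\Tr(\operatorname{adj}(I+tN)M)$, and $\operatorname{adj}(I+tN)$ contains the entry $-t$ (or a higher power of $t$). Solving for $s$ then gives $|s|=O(1/t)$.
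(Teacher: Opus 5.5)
Your proof is correct. It starts from the same observation as the paper but handles the nilpotent case by a different, constructive route.

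Your Step 1 is exactly the paper's basic observation. If $X'$ is invertible with $\vbra{X'}W\vket{X'}<0$ and $W\vket{Z}=0$, then any nonzero eigenvalue of $X'^{-1}Z$ gives a singular combination of $X'$ and $Z$ on which $W$ is negative.

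The paper does not analyse the nilpotent case directly. It applies this observation to every $X'$ in an open neighbourhood of the negative eigenvector $V$, concluding that $X'^{-1}Z$ would be nilpotent on that whole open set. It then reaches a contradiction because the nilpotency locus is cut out by the lower coefficients of $\det(tX'-Z)$. These coefficients are polynomials that do not all vanish identically, since some invertible $X'$ has $\Tr(X'^{-1}Z)\neq0$, so the locus has empty interior.

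You instead exhibit explicitly, via the Jordan form of $N=V^{-1}Z$, points $V+sY'$ with $s=O(1/t)\to0$ at which $(V+sY')^{-1}Z$ has the nonzero eigenvalue $-1/t$. This replaces the genericity argument with a quantitative computation. The paper's version is shorter and needs no Jordan form; yours avoids the fact that a proper algebraic subset has empty interior and shows concretely how far one must move from $V$.

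Three simplifications:
\begin{enumerate}
\item The step you flag as delicate is not. In the Jordan basis, $(1-\alpha s)I+(t-\beta s)J+sE_{21}$ is block upper triangular for any block sizes, so its determinant is exactly $(1-\alpha s)^{d-2}\bigl[(1-\alpha s)^2-s(t-\beta s)\bigr]$. The quadratic factor has a root with $|s|=O(1/t)$.
\item Before projection, $M=PE_{21}P^{-1}$ has rank one, so your first-order expansion is exact and gives $\det(I+tN+sM)=1-ts$.
\item The projection in Step 3 is optional. Without it, the only extra term in $\vbra{\psi}W\vket{\psi}$ is $2\mu\operatorname{Re}\bigl(s\,\Tr(V^\dagger Y)\bigr)=O(1/t)$. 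Also, you only use the existence of a negative eigenvector, not its uniqueness, which is likewise all the paper uses.
\end{enumerate}
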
 
\noindent
The proof of Theorem~\ref{thm:bp-inertia} is given in Appendix~\ref{app:bp-inertia}.

Let $W\in\BP_{d-1}$ with $W\ngeq0.$
By Theorem~\ref{thm:bp-inertia}, it has exactly one negative eigenvalue $-a<0$, no zero eigenvalue, and its normalized negative eigenvector may be written as $\vket{A}$ with $A$ invertible.
Thus, we may write
\begin{equation}\label{eq:witness-spectrum}
\mu^\uparrow(W)
=
(-a,b_1,\dots,b_{d^2-1})\in \mathbb R^{d^2},
\quad b_{i+1}\ge b_i>0
\end{equation}
and set $B_r:=\sum_{i=1}^r b_i.$
In this paper, we refer to $B_r$ as the partial sums of the positive eigenvalues.

\subsection{Partial sum bounds}

With $a$ and $A$ as above, set $G=W+a\vket{A}\vbra{A}.$ 
Then
\begin{equation}\label{eq:negative-direction-setup}
\begin{gathered}
G\ge0,
\;
G\vket{A}=0,\\
\;
\vbra{A}A\rangle\!\rangle=1,
\;
\spec(G)=(0,b_1,\ldots,b_{d^2-1}).
\end{gathered}
\end{equation}


Let $A^\perp := \{\, Y \in \M_d : \mathrm{Tr}(A^\dagger Y) = 0 \,\}$ denote the orthogonal complement of $A$ in $\M_d$ with respect to the Hilbert–Schmidt inner product.
For $X\in\M_d$, let
\[
\operatorname{spr}(X)
:=
\max\{|z|:z\in\spec(X)\}
\]
be its spectral radius, where $\spec(X)$ is the set of eigenvalues of $X$.
Then we obtain the following lemma.

\begin{Lem}\label{lem:det-energy}
Let $W\in\BP_{d-1}$ with $W\ngeq0$, and let $G$, $a$, and $A$ be as in Eq.~\eqref{eq:negative-direction-setup}. 
For every $Y\in A^\perp$,
\begin{equation}\label{eq:rho-bound}
\vbra{Y}G\vket{Y} \ge a\operatorname{spr}(A^{-1}Y)^2.
\end{equation}
Consequently,
\begin{equation}\label{eq:quadratic-bound}
\frac1a\vbra{Y}G\vket{Y}\ge\frac1d|\Tr[(A^{-1}Y)^2]|.
\end{equation}
\end{Lem}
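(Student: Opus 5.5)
The idea is to test the block positivity of $W$ on a family of rank-deficient matrices built from $A$ and $Y$. Fix $Y\in A^\perp$ and let $z\in\spec(A^{-1}Y)$, so that $\det(A^{-1}Y-zI)=0$. The matrix $Y-zA=A(A^{-1}Y-zI)$ is then singular, i.e.\ $\rank(Y-zA)\le d-1$. Since $W\in\BP_{d-1}$, we get
\[
\vbra{Y-zA}W\vket{Y-zA}\ge0 .
\]

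Next I expand this using $W=G-a\vket{A}\vbra{A}$. From Eq.~\eqref{eq:negative-direction-setup} we have $G\vket{A}=0$, and $G$ is Hermitian, so $G\vket{Y-zA}=G\vket{Y}$ and $\vbra{Y-zA}G\vket{Y-zA}=\vbra{Y}G\vket{Y}$. We also have $\vbra{A}Y-zA\rangle\!\rangle=\Tr(A^\dagger Y)-z\Tr(A^\dagger A)=-z$, using $Y\in A^\perp$ and the normalization of $\vket{A}$. Substituting,
\[
0\le \vbra{Y}G\vket{Y}-a|z|^2 .
\]
Taking $z$ to be an eigenvalue of maximal modulus gives Eq.~\eqref{eq:rho-bound}. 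Invertibility of $A$ is needed for $A^{-1}Y$ to make sense, and it holds because $\vket{A}$ is the negative eigenvector: if $\rank A\le d-1$, block positivity would give $\vbra{A}W\vket{A}\ge0$, contradicting $\vbra{A}W\vket{A}=-a<0$.

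For Eq.~\eqref{eq:quadratic-bound}, let $z_1,\dots,z_d$ be the eigenvalues of $M=A^{-1}Y$, counted with multiplicity. Then
\[
|\Tr(M^2)|=\Bigl|\sum_i z_i^2\Bigr|\le\sum_i|z_i|^2\le d\,\operatorname{spr}(M)^2 .
\]
Dividing Eq.~\eqref{eq:rho-bound} by $a>0$ gives the claim.

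The only delicate step is the first one: the right test vector is $Y-zA$ with $z$ an eigenvalue of $A^{-1}Y$, because that shift is exactly what makes the matrix singular while leaving the $G$-energy unchanged. After that choice, the argument is routine algebra.
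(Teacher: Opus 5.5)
Your proof is correct and matches the paper's argument essentially step for step: you test $(d-1)$-block positivity on the singular matrix $Y-\zeta A=A(A^{-1}Y-\zeta I)$, use $G\vket{A}=0$ and $Y\in A^\perp$ to get $\vbra{Y}G\vket{Y}-a|\zeta|^2\ge0$, take $\zeta$ of maximal modulus, and then bound $|\Tr[(A^{-1}Y)^2]|$ by $d\operatorname{spr}(A^{-1}Y)^2$. Your explicit check that $A$ is invertible ($\vbra{A}W\vket{A}=-a<0$ rules out $\rank A\le d-1$) spells out what the paper asserts when it says the negative eigenvector has Schmidt rank $d$.
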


\begin{proof}
Let $\zeta$ be an eigenvalue of $A^{-1}Y$. 
Then $X=Y-\zeta A=A(A^{-1}Y-\zeta I)$ is singular, so $\rank X\le d-1$. $(d-1)$-block positivity gives
\[
0\le\vbra{X}W\vket{X}=\vbra{Y}G\vket{Y}-a|\zeta|^2,
\]
because $G\vket{A}=0$.
Choosing $\zeta$ with $|\zeta| = \operatorname{spr}(A^{-1}Y)$ proves the inequality in Eq.~\eqref{eq:rho-bound}. 
If $\zeta_1,\ldots,\zeta_d$ are the eigenvalues of $A^{-1}Y$, then $|\Tr[(A^{-1}Y)^2]|=|\sum_i\zeta_i^2|\le \sum_i|\zeta_i|^2 \le d\operatorname{spr}(A^{-1}Y)^2$.
\end{proof}

The following lemma guarantees the existence of a nonzero normal matrix in a sufficiently large subspace.




\begin{Lem}\label{lem:normal-subspace}
Let $V\subset \M_d$ be a complex linear subspace. 
If
\[
\dim_{\mathbb C}V\ge \left\lceil \frac{d^2}{2}\right\rceil,
\]
then $V$ contains a nonzero normal matrix. 
More precisely, $V$ contains a nonzero matrix of the form
\[
Z=H+cI_d,
\]
where $H=H^\dagger$ and $c\in\C$.
\end{Lem}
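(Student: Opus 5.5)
Decompose $\M_d=\mathcal H\oplus i\mathcal H$, where $\mathcal H$ is the real vector space of Hermitian matrices, $\dim_{\mathbb R}\mathcal H=d^2$. Regard $V$ as a real subspace of $\M_d\cong\mathbb R^{2d^2}$, with $\dim_{\mathbb R}V=2\dim_{\mathbb C}V\ge 2\lceil d^2/2\rceil\ge d^2$. The goal is to find a nonzero $Z\in V$ whose anti-Hermitian part is a multiple of the identity, i.e. $Z=H+ i t I_d$ with $H$ Hermitian and $t\in\mathbb R$. More generally $Z=H+cI_d$ with $c\in\C$; the real part of $c$ can be absorbed into $H$. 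Any such $Z$ is normal, because $H$ commutes with $I_d$.

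To do this, define the real-linear map
\[
\phi:V\to \mathcal H/\mathbb R I_d,\qquad \phi(Z)=\Bigl[\tfrac{Z-Z^\dagger}{2i}\Bigr],
\]
which sends $Z$ to the class of its anti-Hermitian part (divided by $i$) modulo real multiples of the identity. The target has real dimension $d^2-1$. Since $\dim_{\mathbb R}V\ge d^2>d^2-1$, rank–nullity gives a nonzero $Z\in\ker\phi$.

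For such a $Z$ we have $(Z-Z^\dagger)/(2i)=tI_d$ for some $t\in\mathbb R$. Writing $Z=\frac{Z+Z^\dagger}{2}+i\frac{Z-Z^\dagger}{2i}$ then gives
\[
Z=H+itI_d,\qquad H=\tfrac{Z+Z^\dagger}{2}.
\]
This is exactly the form claimed, with $c=it$. Such a $Z$ is normal, and it is nonzero by construction.

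I expect no real obstacle. The only point needing care is the dimension count: it has to be done over $\mathbb R$, and the one-dimensional freedom $\mathbb R I_d$ in the quotient is what gives the strict inequality $d^2>d^2-1$. In fact the count shows the weaker hypothesis $2\dim_{\mathbb C}V\ge d^2$ already suffices, and this is precisely $\dim_{\mathbb C}V\ge\lceil d^2/2\rceil$.
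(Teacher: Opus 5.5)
Your proof is correct and is essentially the paper's argument: the kernel of your map $\phi$ is exactly $V\cap\mathcal S$, where $\mathcal S=\{H+itI_d : H=H^\dagger,\ t\in\mathbb R\}$ has real dimension $d^2+1$. Your rank--nullity bound $\dim_{\mathbb R}\ker\phi\ge 2m-(d^2-1)$ is the same as the paper's intersection estimate $\dim_{\mathbb R}(V\cap\mathcal S)\ge 2m+(d^2+1)-2d^2$.
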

\noindent
A short proof of Lemma~\ref{lem:normal-subspace} is given in Appendix~\ref{app:normal}.

\begin{Prop}\label{prop:Br}
Let $W\in\BP_{d-1}$ with $W\ngeq0$ and write its spectrum as in Eq.~\eqref{eq:witness-spectrum}.
For any integer \(r\) satisfying $\left\lceil\frac{d^2}{2}\right\rceil\le r\le d^2-1$,
\begin{equation}\label{eq:Br}
B_r\ge\left(r-\left\lceil \frac{d^2}{2}\right\rceil+1\right)a.
\end{equation}
\end{Prop}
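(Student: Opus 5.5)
The plan is to reduce the statement to a single eigenvalue inequality, namely
\[
b_{c}\ge a,\qquad c:=\left\lceil \tfrac{d^2}{2}\right\rceil .
\]
Once this holds, monotonicity $b_{i+1}\ge b_i$ gives $b_i\ge a$ for all $i\ge c$. Since all $b_i>0$, for $c\le r\le d^2-1$ we get
\[
B_r\ge \sum_{i=c}^{r} b_i\ge (r-c+1)\,a,
\]
which is Eq.~\eqref{eq:Br}.

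To prove $b_c\ge a$, I would use a Courant--Fischer style argument on the operator $G$ from Eq.~\eqref{eq:negative-direction-setup}. Let $\vket{Y_1},\dots,\vket{Y_c}$ be orthonormal eigenvectors of $G$ for $b_1,\dots,b_c$. Since $\vket{A}$ is the eigenvector for the eigenvalue $0$ and these eigenvalues are positive, each $Y_j\in A^\perp$. Set $V=\operatorname{span}\{Y_1,\dots,Y_c\}$. Then $V\subseteq A^\perp$, $\dim V=c$, and
\[
\vbra{Y}G\vket{Y}\le b_c\norm{Y}_2^2 \quad\text{for all } Y\in V,
\]
where $\norm{\cdot}_2$ is the Hilbert--Schmidt norm. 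It therefore suffices to find a nonzero $Y\in V$ with $\vbra{Y}G\vket{Y}\ge a\norm{Y}_2^2$.

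For this I would combine Lemma~\ref{lem:det-energy} with Lemma~\ref{lem:normal-subspace}. Since $A$ is invertible, $A^{-1}V$ is a subspace of $\M_d$ of dimension $c$. Lemma~\ref{lem:normal-subspace} then gives a nonzero normal $Z=A^{-1}Y\in A^{-1}V$, and $Y=AZ\in V$ is nonzero. By Lemma~\ref{lem:det-energy},
\[
\vbra{Y}G\vket{Y}\ge a\operatorname{spr}(Z)^2 .
\]
Because $Z$ is normal, its operator norm equals $\operatorname{spr}(Z)$. Using $\norm{AZ}_2\le \norm{A}_2\norm{Z}_\infty$ and the normalization $\norm{A}_2^2=\vbra{A}A\rangle\!\rangle=1$, we obtain
\[
\norm{Y}_2=\norm{AZ}_2\le \operatorname{spr}(Z).
\]
Chaining the inequalities yields
\[
b_c\norm{Y}_2^2\ge \vbra{Y}G\vket{Y}\ge a\operatorname{spr}(Z)^2\ge a\norm{Y}_2^2,
\]
and hence $b_c\ge a$.

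The main obstacle is the step of converting the spectral-radius bound of Lemma~\ref{lem:det-energy} into a bound against $\norm{Y}_2$. For a general matrix the spectral radius can be much smaller than the norm, so this conversion fails. This is exactly why normality is needed, and hence why the dimension threshold $\lceil d^2/2\rceil$ from Lemma~\ref{lem:normal-subspace} enters the statement. The remaining points are routine: that $A^{-1}V$ has full dimension $c$, and that the eigenvectors for $b_1,\dots,b_c$ lie in $A^\perp$.
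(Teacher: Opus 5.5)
Your proof is correct, and it reaches the bound by a leaner route than the paper. The paper fixes an arbitrary $r$-dimensional subspace $E\subseteq A^\perp$. It applies Lemma~\ref{lem:normal-subspace} repeatedly to the shrinking subspaces $A^{-1}\bigl(E\cap\{Y_1,\dots,Y_j\}^\perp\bigr)$ to extract $r-\lceil d^2/2\rceil+1$ orthonormal vectors $Y_j=AZ_j$ with $Z_j$ normal. It then uses $G\ge0$ to get $\Tr(P_EG)\ge(r-\lceil d^2/2\rceil+1)a$, and concludes with Ky Fan's minimum principle for $B_r$.

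You instead apply Lemma~\ref{lem:normal-subspace} only once, to the span of the bottom $c=\lceil d^2/2\rceil$ eigenvectors of $G$. On that subspace you use the Courant--Fischer bound $\vbra{Y}G\vket{Y}\le b_c\norm{Y}^2$. The core estimate is the same in both arguments: Lemma~\ref{lem:det-energy} combined with $\norm{AZ}_2\le\operatorname{spr}(Z)$ for normal $Z$ and $\vbra{A}A\rangle\!\rangle=1$.

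Your version yields the pointwise statement $b_i\ge a$ for every $i\ge c$. This is strictly stronger than Eq.~\eqref{eq:Br}, since it also keeps the positive slack $B_{c-1}$, giving $B_r\ge B_{c-1}+(r-c+1)a$. It also removes the need for the iterative extraction and for Ky Fan. The paper's formulation has the advantage of being stated directly in the partial-sum form that Lemma~\ref{lem:partial-sum} consumes. However, it is implied by your eigenvalue bound.
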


\begin{proof}
Let $E\subset A^\perp$ be an arbitrary $r$-dimensional complex subspace.  
Since $A^{-1}$ is invertible, $\dim_{\C}(A^{-1}E)=r,$ where $A^{-1}E=\{A^{-1}Y:Y\in E\}.$
By Lemma~\ref{lem:normal-subspace}, $A^{-1}E$ contains a nonzero normal matrix $Z_1$.  
Without loss of generality we may assume  
\[
Y_1=AZ_1,
\text{ }
\vbra{Y_1}Y_1\rangle\!\rangle=1,
\text{ }
Z_1\text{ is normal}.
\]
Set $E_1=E\cap Y_1^\perp.$
As long as the remaining dimension is at least $\lceil d^2/2\rceil$, the same argument may be repeated. 
We obtain orthonormal vectors $Y_1,\dots,Y_{r-\lceil d^2/2\rceil+1}\in E$ such that $ Y_j=AZ_j$, and $Z_j$ is normal.

Let $s_1(X)$ denote the largest singular value of $X$. 
Recall that $\operatorname{spr}(X)\le s_1(X)$ with equality when \(X\) is normal~\cite{Bhatia1997,HornJohnson2013}.
Since each \(Z_j\) is normal, $s_1(Z_j)=\operatorname{spr}(Z_j)$.
For each $j$,
\[
\begin{aligned}
1&=\Tr(A^\dagger A Z_jZ_j^\dagger)
\le s_1(Z_j)^2\Tr(A^\dagger A)\\
&=s_1(Z_j)^2
=\operatorname{spr}(Z_j)^2.
\end{aligned}
\]
Applying Lemma~\ref{lem:det-energy} to $Y_j=AZ_j$,
\[
\vbra{Y_j}G\vket{Y_j}
\ge
a\operatorname{spr}(Z_j)^2
\ge a.
\]

Extending $\{\vket{Y_i}\}$ to an orthonormal basis of $E$, let $P_E=\sum_{i=1}^{r}\vket{Y_i}\vbra{Y_i}$ be the orthogonal projection onto $E$.
Then using $G\ge0$, we obtain $\Tr(P_EG)\ge(r-\lceil d^2/2\rceil+1)a$. 
Ky Fan's minimum principle~\cite{Zhan2013} yields 
\[
B_r
=
\min_{\substack{\dim E=r \\ E\subseteq A^\perp}}\Tr(P_EG)\ge\left(r-\left\lceil \frac{d^2}{2}\right\rceil+1\right)a.
\]
\end{proof}

\subsection{Product bounds for paired positive eigenvalues}

Set $Q:=\frac1aG|_{A^\perp}$.
Define the complex symmetric bilinear form
\begin{equation}\label{eq:takagi-setup}
\beta(Y,Z):=\frac1d\Tr(A^{-1}YA^{-1}Z),
\quad Y,Z\in A^\perp.
\end{equation}
With this notation, $\vbra{Y}Q\vket{Y}\ge|\beta(Y,Y)|$ by Eq.~(\ref{eq:quadratic-bound}).

Let $\mathcal F=\{F_1,\ldots,F_{d^2-1}\}$ be a Hilbert--Schmidt orthonormal basis of $A^\perp$ and define $
S_\beta^{\mathcal F}
=
[\beta(F_i,F_j)]_{i,j=1}^{d^2-1}.$
Since $\beta$ is symmetric, $S_\beta^{\mathcal F}$ is complex symmetric. 
Although the coordinate matrix $S_\beta^{\mathcal F}$ depends on the choice of the orthonormal basis $\mathcal F$, its singular values do not. 
Changing the orthonormal basis acts by unitary congruence, so the singular values of this matrix are basis independent; see Appendix~\ref{appen:S_beta}.
Accordingly, whenever only its singular values are relevant, we simply write $S_\beta$.

\begin{Lem}\label{lem:takagi-lower}
Let $S_\beta$ be the matrix representation, with respect to an orthonormal basis of $A^\perp$, of the bilinear form $\beta$ defined in Eq.~\eqref{eq:takagi-setup}.
Then every singular value of $S_\beta$ is at least $2/d$.
\end{Lem}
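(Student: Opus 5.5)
The plan is to identify the singular values of $S_\beta$ with the singular values of an explicit operator on $A^\perp$, to reduce to the case where $A$ is diagonal, and then to bound the diagonal and off-diagonal parts separately.

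\textbf{Step 1: a variational formula.} Since $S_\beta$ is complex symmetric and its singular values are basis independent, the smallest one is
\[
s_{\min}(S_\beta)=\min_{Y\in A^\perp,\ \|Y\|_2=1}\ \max_{Z\in A^\perp,\ \|Z\|_2=1}|\beta(Y,Z)|.
\]
Write $\beta(Y,Z)=\frac1d\Tr(NZ)=\frac1d\langle N^\dagger,Z\rangle_{HS}$ with $N:=A^{-1}YA^{-1}$. The inner maximum is then $\frac1d\|P_{A^\perp}N^\dagger\|_2$. So it suffices to show
\[
\|P_{A^\perp}(A^{-1}YA^{-1})^\dagger\|_2\ge 2\|Y\|_2
\qquad\text{for all } Y\in A^\perp .
\]

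\textbf{Step 2: reduce to diagonal $A$.} Take a singular value decomposition $A=U\Sigma V$ with $\Sigma=\operatorname{diag}(\sigma_1,\dots,\sigma_d)$, $\sigma_1\ge\dots\ge\sigma_d>0$ (positive since $A$ is invertible) and $\sum_i\sigma_i^2=1$. Substitute $Y=UY'V$. Then $A^{-1}YA^{-1}=V^\dagger(\Sigma^{-1}Y'\Sigma^{-1})U^\dagger$. The maps $Y\mapsto U^\dagger YV^\dagger$ and $N\mapsto VNU$ are Hilbert--Schmidt unitaries. Moreover, $Y\in A^\perp$ if and only if $Y'\in\Sigma^\perp$, and the projection onto $A^\perp$ is carried to the projection onto $\Sigma^\perp$. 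Hence we may assume $A=\Sigma$.

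\textbf{Step 3: the off-diagonal part.} For $A=\Sigma$ and $i\ne j$, the matrix $N^\dagger$ has $(j,i)$ entry $\overline{y_{ij}}/(\sigma_i\sigma_j)$. The projection onto $\Sigma^\perp$ removes a multiple of the diagonal matrix $\Sigma$, so it leaves off-diagonal entries untouched. Since $\sigma_i\sigma_j\le(\sigma_i^2+\sigma_j^2)/2\le 1/2$, each such entry has modulus at least $2|y_{ij}|$. The off-diagonal part of $P_{\Sigma^\perp}N^\dagger$ therefore has norm at least $2\|Y_{\mathrm{off}}\|_2$.

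\textbf{Step 4: the diagonal part (main obstacle).} Put $x_i=y_{ii}$, so that the constraint $Y\in\Sigma^\perp$ reads $\sum_i\sigma_ix_i=0$. The diagonal of $N^\dagger$ is $D\bar x$ with $D=\operatorname{diag}(\sigma_i^{-2})$. Because the $\sigma_i$ are real, $\bar x\perp\sigma$ as well. Let $P$ denote the orthogonal projection of $\mathbb C^d$ onto $\sigma^\perp$. Bound the projected norm from below by testing against the unit vector $u=\bar x/\|x\|\in\sigma^\perp$:
\[
\|P D\bar x\|\ge|\langle u,D\bar x\rangle|=\frac{\langle\bar x,D\bar x\rangle}{\|x\|}.
\]
By Courant--Fischer, the Rayleigh quotient of $D$ on the hyperplane $\sigma^\perp$ is at least the second smallest eigenvalue of $D$, namely $\sigma_2^{-2}$. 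Since $2\sigma_2^2\le\sigma_1^2+\sigma_2^2\le1$, this is at least $2$. Hence $\|PD\bar x\|\ge2\|x\|$.

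The delicate point is exactly this step. The single large value $\sigma_1^{-2}$, which can be close to $1$, is excluded only because the constraint removes one dimension. This is why the bound fails on all of $\M_d$ (where the best bound is $1/d$) but holds on $A^\perp$.

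\textbf{Conclusion.} The diagonal and off-diagonal parts are Hilbert--Schmidt orthogonal. Combining Steps 3 and 4 gives $\|P_{A^\perp}N^\dagger\|_2^2\ge4\|Y\|_2^2$. By Step 1, every singular value of $S_\beta$ is at least $2/d$.
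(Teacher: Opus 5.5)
\textbf{Verdict: genuine gap in Step 4.} Steps 1--3 are sound and follow the paper's route closely. The paper likewise uses the SVD reduction to $A=\operatorname{diag}(s_i)$, splits $A^\perp$ into off-diagonal and diagonal parts, and applies AM--GM, $2\sigma_i\sigma_j\le1$, to the off-diagonal entries. Your formula $s_{\min}(S_\beta)=\min_Y\max_Z|\beta(Y,Z)|$ is a valid substitute for the paper's block-by-block computation. Testing $PD\bar x$ against $u=\bar x/\|x\|$ correctly reduces the diagonal part to the inequality the paper proves in Appendix~D: $\sum_i |x_i|^2/\sigma_i^2\ge 2\sum_i|x_i|^2$ whenever $\sum_i\sigma_i x_i=0$.

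The problem is how you prove that inequality. Courant--Fischer gives $\lambda_2(D)=\max_{\dim\mathcal S=d-1}\min_{x\in\mathcal S}R_D(x)$, where $R_D(x)=\langle x,Dx\rangle/\|x\|^2$.
\begin{itemize}
\item For any \emph{particular} hyperplane this yields $\min_{\sigma^\perp}R_D\le\sigma_2^{-2}$, which is the wrong direction.
\item By Cauchy interlacing, the only general lower bound is $\sigma_1^{-2}$, which can be arbitrarily close to $1$.
\item Concretely, take $d=2$, $\sigma_1^2=3/4$, $\sigma_2^2=1/4$. Then $\sigma^\perp$ is spanned by $(\sigma_2,-\sigma_1)$, and $R_D=\sigma_2^2/\sigma_1^2+\sigma_1^2/\sigma_2^2=10/3<4=\sigma_2^{-2}$.
\end{itemize}
So the claimed bound $\ge\sigma_2^{-2}$ is false, and the argument as written gives nothing better than $\sigma_1^{-2}$. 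Your remark that $\sigma_1^{-2}$ is excluded ``because the constraint removes one dimension'' is exactly the flaw: removing an arbitrary direction does not help. The bound $2$ depends on the specific link between the hyperplane $\sigma^\perp$ and $D=\operatorname{diag}(\sigma_i^{-2})$.

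A short repair, essentially the paper's Appendix~D:
\begin{itemize}
\item If all $\sigma_i^2\le 1/2$, every coefficient $\sigma_i^{-2}-2$ is nonnegative and the inequality is immediate.
\item Otherwise exactly one, say $\sigma_1^2=t>1/2$. Then $\sigma_i^{-2}\ge 1/(1-t)$ for $i\ge2$.
\item Cauchy--Schwarz applied to $\sigma_1x_1=-\sum_{i\ge2}\sigma_ix_i$ gives $|x_1|^2\le\frac{1-t}{t}S$, where $S=\sum_{i\ge2}|x_i|^2$.
\item Combining these,
\[
\sum_i|x_i|^2\bigl(\sigma_i^{-2}-2\bigr)\ge\Bigl(\frac{2t-1}{1-t}-\frac{(2t-1)(1-t)}{t^2}\Bigr)S=\frac{(2t-1)^2}{t^2(1-t)}\,S\ge0 .
\]
\end{itemize}
With this replacement, your proof is complete.
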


\begin{proof}
By Appendix~\ref{appen:S_beta}, we may assume $A=\operatorname{diag}(s_1,\ldots,s_d)$ with $s_i>0$ and $\sum_i s_i^2=1$. 
Write $A^\perp=\mathcal O\oplus\mathcal D$, where $\mathcal O=\operatorname{span}\{\ket{i}\bra{j}:i\ne j\}$ and $\mathcal D=\{\operatorname{diag}(z_1,\dots,z_d):\sum_is_iz_i=0\}$.
The two subspaces are orthogonal with respect to both the Hilbert--Schmidt inner product and $\beta$, that is $\beta(\ket{i}\bra{j},D)=0$ for all $D\in\mathcal{D}$.

On $\operatorname{span}\{\ket{i}\bra{j},\ket{j}\bra{i}\}$, $i<j$, the matrix of $\beta$ is
\[
\begin{pmatrix}
0 & \frac{1}{ds_is_j}\\
\frac{1}{ds_is_j} & 0
\end{pmatrix}.
\]
The singular values of off-diagonal blocks are at least $2/d$, because $2s_is_j\le s_i^2+s_j^2\le1$. Appendix~\ref{appen:pf-takagi-lower} proves the same lower bound on the diagonal part. 
Hence every singular value of $S_\beta$ is at least $2/d$.
\end{proof}

\begin{Cor}\label{cor:pair-product}
Let $W\in\BP_{d-1}$, $W\ngeq0$ with spectrum as in Eq.~\eqref{eq:witness-spectrum}.
Then
\begin{equation}\label{eq:pair-general}
b_i b_{d^2-i}\ge\frac{4a^2}{d^2},\; i=1,\ldots,\left\lfloor\frac{d^2}{2}\right\rfloor.
\end{equation}
\end{Cor}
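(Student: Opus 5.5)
The plan is to recast the inequality $\vbra{Y}Q\vket{Y}\ge|\beta(Y,Y)|$ from Lemma~\ref{lem:det-energy} as a contraction statement for a complex symmetric matrix. Lemma~\ref{lem:takagi-lower} then supplies a lower bound on every singular value of $S_\beta$, and a Horn-type multiplicative singular value inequality transfers that bound to products of eigenvalues of $Q$.

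Fix an orthonormal basis $\mathcal F$ of $A^\perp$ and let $n=d^2-1$. In these coordinates $Q$ becomes an $n\times n$ Hermitian matrix. By Eq.~\eqref{eq:negative-direction-setup}, $Q$ has eigenvalues $b_1/a\le\cdots\le b_n/a$, all positive, so $Q>0$ and $Q^{\pm1/2}$ exist. Lemma~\ref{lem:det-energy} gives, for every coordinate vector $y\in\C^n$,
\[
y^\dagger Q y\ \ge\ |y^T S_\beta y|.
\]
Substitute $y=Q^{-1/2}x$ and set $M:=(Q^{-1/2})^T S_\beta Q^{-1/2}$. Then $M$ is complex symmetric and satisfies $|x^TMx|\le\|x\|^2$ for all $x$.

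Next I claim $s_1(M)\le1$. Take a Takagi factorization $M=U^T\Sigma U$ with $U$ unitary and $\Sigma=\operatorname{diag}(\sigma_1\ge\dots\ge\sigma_n\ge0)$. Choose the unit vector $x=U^\dagger e_1$. Since $\overline{U}U^T=I$, we get $x^TMx=\sigma_1$, and hence $\sigma_1\le1$. This use of Takagi is the only genuinely nontrivial step. It is also where the complex-symmetric (rather than Hermitian) nature of $\beta$ matters, because the usual Rayleigh quotient does not directly give the norm here.

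Now invert the substitution to get $S_\beta=(Q^{1/2})^T M Q^{1/2}$. Apply the inequality $s_{i+j-1}(XY)\le s_i(X)s_j(Y)$ twice:
\[
s_{i+j-1}(S_\beta)\ \le\ s_i\big((Q^{1/2})^T\big)\,s_1(M)\,s_j(Q^{1/2})\ \le\ s_i(Q^{1/2})\,s_j(Q^{1/2}).
\]
The transpose does not change singular values. The singular values of $Q^{1/2}$ in decreasing order are $s_k(Q^{1/2})=\sqrt{b_{n+1-k}/a}$.

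Choose $j=n+1-i$, so that $i+j-1=n$. The right-hand side is then $\sqrt{b_{n+1-i}\,b_i}/a=\sqrt{b_i b_{d^2-i}}/a$. Lemma~\ref{lem:takagi-lower} gives $s_n(S_\beta)\ge 2/d$, so
\[
\frac{2}{d}\ \le\ \frac{\sqrt{b_ib_{d^2-i}}}{a},
\]
which is Eq.~\eqref{eq:pair-general}. This holds for every $i=1,\dots,n$, and by the symmetry $i\leftrightarrow d^2-i$ it suffices to state it for $i\le\lfloor d^2/2\rfloor$.
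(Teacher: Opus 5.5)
Your proof is correct and follows essentially the same route as the paper. It uses the same congruence $M=(Q^{-1/2})^T S_\beta Q^{-1/2}$, the same contraction bound $s_1(M)\le 1$, and the same Horn inequality $s_{i+j-1}(XY)\le s_i(X)s_j(Y)$ with $j=d^2-i$ combined with Lemma~\ref{lem:takagi-lower}. The only difference is that you get $s_1(M)\le1$ from a Takagi factorization, whereas the paper bounds $|x^TMy|$ for unit vectors via the polarization identity $4x^TMy=(x+y)^TM(x+y)-(x-y)^TM(x-y)$; both are valid.
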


\begin{proof}
For the given $W$, we here use the objects $A$, $G$, $Q$, $\beta$, and $S_\beta$ from Eqs.~\eqref{eq:negative-direction-setup} and \eqref{eq:takagi-setup}. Then the $(d^2-1)\times (d^2-1)$ matrix $Q$ is positive definite by Theorem~\ref{thm:bp-inertia}.

Define $K=(Q^{-1/2})^T S_\beta Q^{-1/2}$ and $\vket{Y}=Q^{1/2}\vket{X}$.
Then since $\vbra{X}Q\vket{X}\ge|\beta(X,X)|$, 
we have
\[
\vbra{Y}Y\rangle\!\rangle 
=
\vbra{X}Q\vket{X}
\ge
|\beta(X,X)|
=
|\vbra{X^*}S_\beta\vket{X}|
=
|\vbra{Y^*}K\vket{Y}|
\]
Since $K^T=K$, for any unit vectors $\vket{U}$ and $\vket{V}$,
\[
4\vbra{U^*}K\vket{V}
=
\vbra{(U+V)^*}K\vket{U+V}
-
\vbra{(U-V)^*}K\vket{U-V}.
\]
It follows that  
\[
\begin{aligned}
\left|
\vbra{U^*}K\vket{V}
\right|
&\le
\frac14
\left(
\norm{U+V}^2+\norm{U-V}^2
\right)
=1.
\end{aligned}
\]
For a fixed unit vector $\vket{Y}$, set $\vket{Z}=\vket{X^*}$.
Complex conjugation is a bijection of the unit sphere, and therefore the standard dual characterization of the Euclidean norm gives
\[
\norm{K\vket{Y}}
=
\sup_{\norm{Z}=1}
\left|
\vbra{Z}K\vket{Y}
\right|
=
\sup_{\norm{X}=1}
\left|
\vbra{X^*}K\vket{Y}
\right|
\le1.
\]
Hence $\norm{K}\le1$.

Let $\sigma_1(\cdot)\ge\cdots\ge \sigma_n(\cdot)$ denote the singular values. 
Using the inequality $\sigma_{i+j-1}(AB)\le \sigma_i(A)\sigma_j(B)$ with $j=d^2-i$~\cite{Bhatia1997,HornJohnson2013}, we obtain
\[
\begin{aligned}
\sigma_{d^2-1}(S_\beta)
&\le
\sigma_i((Q^{1/2})^T)
\sigma_{d^2-i}(KQ^{1/2})\\
&\le
\sigma_i((Q^{1/2})^T)
\sigma_{d^2-i}(Q^{1/2}),
\end{aligned}
\]
where the second inequality follows from $\norm{K}\le1$.

The singular values of both $Q^{1/2}$ and $(Q^{1/2})^T$ are  $\sqrt{b_1/a},\dots \sqrt{b_{d^2-1}/a}$.
Thus we have
\[
\sigma_i((Q^{1/2})^T)=\sqrt{\frac{b_{d^2-i}}{a}},
\;
\sigma_{d^2-i}(Q^{1/2})=\sqrt{\frac{b_i}{a}}.
\]
By Lemma~\ref{lem:takagi-lower},
\[
\frac{2}{d}
\le
\sigma_{d^2-1}(S_\beta)
\le
\sqrt{b_i b_{d^2-i}}/a.
\]
Hence we obtain 
\[
b_i b_{d^2-i}\ge\frac{4a^2}{d^2},
\]
for $i=1,\dots,\lfloor d^2/2\rfloor$.
\end{proof}

\section{A spectral threshold for maximal Schmidt number}\label{sec:SNFS}

We now turn the partial sum bound into a spectral criterion on states. 
The following lemma isolates exactly the partial sum inequalities required by the $\LS_p$ condition.

\begin{Lem}\label{lem:partial-sum}
For $W\in\BP_{d-1}$ with the spectrum as in Eq.~\eqref{eq:witness-spectrum},
\begin{equation}\label{eq:required-partials}
B_{d^2-p+r}\ge (r+1)a,\;\text{for }r=0,\dots,p-1,            
\end{equation}
if and only if 
$\sum_{i=1}^{d^2}\lambda_i^\downarrow(\rho)\mu_i^\uparrow(W)\ge0$
for every  $\rho\in\LS_p$.
\end{Lem}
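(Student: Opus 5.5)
The plan is to treat both sides as statements about a single linear functional on ordered spectra. Write $n=d^2$ and $\lambda_i=\lambda_i^\downarrow(\rho)$. With $\mu^\uparrow(W)=(-a,b_1,\dots,b_{n-1})$ as in Eq.~\eqref{eq:witness-spectrum}, the quantity of interest is
\[
f(\lambda):=\sum_{i=1}^{n}\lambda_i\mu_i^\uparrow(W)=-a\lambda_1+\sum_{i=2}^{n}b_{i-1}\lambda_i .
\]
I will prove the two implications separately. Sufficiency comes from an Abel summation that exposes the partial sums $B_k$. Necessity comes from evaluating $f$ on explicit extremal spectra in $\LS_p$.

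\emph{Sufficiency.} Set $\lambda_{n+1}:=0$ and $B_0:=0$. Summation by parts gives
\[
\sum_{i=2}^{n}b_{i-1}\lambda_i=\sum_{k=2}^{n}(\lambda_k-\lambda_{k+1})B_{k-1},
\]
in which every coefficient $\lambda_k-\lambda_{k+1}$ is nonnegative because the $\lambda_i$ are decreasing. I then bound each $B_{k-1}$ from below.
\begin{itemize}
\item For $k\le n-p$, use only $B_{k-1}\ge0$, which holds since $b_i>0$.
\item For $k=n-p+1+r$ with $0\le r\le p-1$, apply Eq.~\eqref{eq:required-partials}, which says $B_{k-1}=B_{n-p+r}\ge (r+1)a=(k-n+p)a$.
\end{itemize}
This yields
\[
f(\lambda)\ge a\Big[-\lambda_1+\sum_{k=n-p+1}^{n}(k-n+p)(\lambda_k-\lambda_{k+1})\Big].
\]
Reversing the Abel summation, the bracketed sum equals $\sum_{i=n-p+1}^{n}\lambda_i$. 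Hence $f(\lambda)\ge a\big(\sum_{i=n-p+1}^n\lambda_i-\lambda_1\big)\ge0$ by the definition of $\LS_p$. The only care needed is the index bookkeeping at the boundary $k=n-p+1$ and the convention $\lambda_{n+1}=0$, and I will write that telescoping out explicitly.

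\emph{Necessity.} Fix $r\in\{0,\dots,p-1\}$ and define the unnormalized decreasing vector
\[
w^{(r)}=(r+1,\underbrace{1,\dots,1}_{n-p+r},0,\dots,0),
\]
which has $n-p+r+1\le n$ nonzero entries. Let $\rho_r$ be any state, for instance a diagonal one, with spectrum $w^{(r)}/\|w^{(r)}\|_1$.
\begin{itemize}
\item Its last $p$ entries occupy positions $n-p+1,\dots,n$. Among these, exactly the positions up to $n-p+r+1$ are nonzero, which is $r+1$ ones. Since position $1$ lies outside this window, the sum of the $p$ smallest entries equals $r+1=\lambda_1$, so $\rho_r\in\LS_p$.
\item Evaluating the functional gives $f(w^{(r)})=-(r+1)a+B_{n-p+r}$, so the hypothesis $f\ge0$ on $\LS_p$ forces $B_{n-p+r}\ge(r+1)a$.
\end{itemize}
For $r=0$ the vector is simply flat on $n-p+1$ entries. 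This requires $p\ge1$ so that position $1$ is not among the last $p$; this is implicit in $\LS_p$ being nonempty.

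I do not expect a real obstacle. The sufficiency direction is the crux, and the point that makes it work is that the coefficients $(k-n+p)$ produced by the $\LS_p$ sum exactly match the thresholds in Eq.~\eqref{eq:required-partials}. The step most prone to error is the off-by-one indexing between $B_{k-1}$ and $r$, and I will verify it on the case $p=1$ as a sanity check.
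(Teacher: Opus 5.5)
Your proof is correct and is essentially the paper's argument. The paper also Abel-sums over the gaps $\lambda_j-\lambda_{j+1}$, and it uses the same extremal spectra $((r+1)t,t,\dots,t,0,\dots,0)$ for the converse. It works with the partial sums $M_j=-a+B_{j-1}$ of the full witness spectrum rather than splitting off $-a\lambda_1$, but that is the same computation.

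One small slip is in your $r=0$ remark. Position $1$ lies outside the last-$p$ window exactly when $p\le d^2-1$; this has nothing to do with $p\ge1$. For $p=d^2$ your vector is still in $\LS_p$, since its last $p$ entries sum to the whole trace. In that case the sufficiency direction is vacuous, because $B_0=0<a$.
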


\begin{proof}
Let $q=d^2-p$.
Suppose Eq.~(\ref{eq:required-partials}) holds.
For a given state $\rho\in \LS_p$, let $\lambda^\downarrow (\rho)= (\lambda_1,\dots,\lambda_{d^2})$ and write $y_j=\lambda_j-\lambda_{j+1}\ge0$, with $\lambda_{d^2+1}=0$. 
Since $\lambda_i=\sum_{j=i}^{d^2}y_j$, we note that
\[
\begin{aligned}
\sum_{i=q+1}^{d^2} \lambda_i
&=
\sum_{i=q+1}^{d^2}\sum_{j=i}^{d^2} y_j\\
&=
\sum_{j=q+1}^{d^2}
\#\{i:q+1\le i\le j\}\,y_j\\
&=
\sum_{j=q+1}^{d^2} (j-q)y_j.
\end{aligned}
\]
Setting $r=j-q-1$ rewrites the $\LS_p$ condition as
\[
\lambda_1\le \sum_{i=q+1}^{d^2}\lambda_i
\Longleftrightarrow
\sum_{j=1}^{q}y_j
\le
\sum_{r=0}^{p-1}r\,y_{q+1+r}.
\]
Let $M_j=\sum_{i=1}^j\mu_i$ for $\mu^\uparrow(W)=(\mu_1,\ldots,\mu_{d^2})$. 
Abel summation gives $\sum_{i=1}^{d^2}\lambda_i\mu_i=\sum_{j=1}^{d^2}y_jM_j$. 
Since $b_i>0$, $M_j\ge-a$ for $j\le q$. Eq.~\eqref{eq:required-partials} implies $M_{q+1+r}\ge ra$. Hence
\[
\sum_{i}\lambda_i\mu_i
= \sum_iy_iM_i
\ge -a\sum_{j=1}^{q}y_j+a\sum_{r=0}^{p-1}r\,y_{q+1+r}\ge0.
\]
Conversely, suppose that
$\sum_{i=1}^{d^2}\lambda_i^\downarrow(\rho)\mu_i^\uparrow(W)\ge0$
for every  $\rho\in\LS_p$.
Let $r$ be fixed, and choose $\rho$ whose $q+r+1$ nonzero eigenvalues are $\lambda_1=(r+1)t$, $\lambda_2=\dots=\lambda_{q+r+1}=t>0$ and the others are zero.
After normalization $\rho\in \LS_p$.
If $B_{q+r}<(r+1)a$ for some $r$, then
\[
\sum_{i=1}^{d^2}\lambda_i^\downarrow(\rho)\mu_i^\uparrow(W)
=
(-(r+1)a + B_{q+r})t<0,
\]
which is impossible.
\end{proof}

\begin{Thm}\label{thm:main}
For every $d\ge2$,
\[
\LS_p\subseteq\ASN_{d-1},
\
\text{whenever}
\
1\le p\le\left\lfloor\frac{d^2}{2}\right\rfloor.
\]
\end{Thm}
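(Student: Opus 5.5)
We reduce the statement to the partial sum inequalities of Lemma~\ref{lem:partial-sum} and verify them with Proposition~\ref{prop:Br}. By the spectral characterization \eqref{eq:ASN_k} with $k=d-1$, a state $\rho$ lies in $\ASN_{d-1}$ if and only if $\sum_i\lambda_i^\downarrow(\rho)\mu_i^\uparrow(W)\ge0$ for every $W\in\BP_{d-1}$. If $W\ge0$ this sum is trivially nonnegative, since both eigenvalue vectors are then nonnegative. So it suffices to treat $W\in\BP_{d-1}$ with $W\ngeq0$, whose spectrum has the form \eqref{eq:witness-spectrum} by Theorem~\ref{thm:bp-inertia}. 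By Lemma~\ref{lem:partial-sum}, the required inequality holds for all $\rho\in\LS_p$ exactly when
\[
B_{d^2-p+r}\ge(r+1)a,\qquad r=0,\dots,p-1 .
\]

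To verify these, we fix $p\le\lfloor d^2/2\rfloor$ and set $r'=d^2-p+r$. Then $r'\le d^2-1$, because $r\le p-1$. Also
\[
r'\ge d^2-\lfloor d^2/2\rfloor=\lceil d^2/2\rceil,
\]
so $r'$ lies in the range where Proposition~\ref{prop:Br} applies. That proposition gives
\[
B_{r'}\ge\bigl(r'-\lceil d^2/2\rceil+1\bigr)a=\bigl(d^2-p+r-\lceil d^2/2\rceil+1\bigr)a=\bigl(\lfloor d^2/2\rfloor-p+r+1\bigr)a\ge(r+1)a,
\]
where the last step uses $p\le\lfloor d^2/2\rfloor$ and $a>0$. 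This is exactly \eqref{eq:required-partials}, so Lemma~\ref{lem:partial-sum} yields $\sum_i\lambda_i^\downarrow(\rho)\mu_i^\uparrow(W)\ge0$ for every $\rho\in\LS_p$. Since this holds for every non-positive $W\in\BP_{d-1}$, we conclude $\rho\in\ASN_{d-1}$.

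The substantive difficulty is already absorbed by Proposition~\ref{prop:Br}, which rests on Lemma~\ref{lem:normal-subspace} and Theorem~\ref{thm:bp-inertia}. What remains is index bookkeeping, and the one point to check carefully is the identity $d^2-\lceil d^2/2\rceil=\lfloor d^2/2\rfloor$. This identity shows the threshold $\lfloor d^2/2\rfloor$ is exactly where the offset in Proposition~\ref{prop:Br} matches the requirement of Lemma~\ref{lem:partial-sum}, for both even and odd $d$. The case $d=2$ needs no separate treatment, since then $\lfloor d^2/2\rfloor=2$ and the argument applies verbatim.
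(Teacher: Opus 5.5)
Your proposal is correct and follows essentially the same route as the paper's proof. You use \eqref{eq:ASN_k} to reduce to non-positive $W\in\BP_{d-1}$, apply Proposition~\ref{prop:Br} at the index $d^2-p+r\ge\lceil d^2/2\rceil$ to get $B_{d^2-p+r}\ge(r+1)a$, and conclude via Lemma~\ref{lem:partial-sum}. Your explicit check that $d^2-p+r\le d^2-1$ makes precise a step the paper leaves implicit.
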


\begin{proof}
Let $\rho\in\LS_p$.
Then by Eq.~(\ref{eq:ASN_k}), it suffices to prove $\sum_{i=1}^{d^2}\lambda_i^\downarrow(\rho)\mu_i^\uparrow(W)\ge0$ for every $W\in\BP_{d-1}$.
For $W\ge0$, it is trivial, so we consider $W\ngeq0$.

Write $\mu^\uparrow(W)=(-a,b_1,\dots,b_{d^2-1})$. 
Since $p\le\lfloor d^2/2\rfloor$, one has $d^2-p\ge \lceil d^2/2\rceil$.
Applying Proposition~\ref{prop:Br},
\[
B_{d^2-p+r}\ge(d^2-p-\lceil d^2/2\rceil+r+1)a\ge(r+1)a.
\]
By Lemma~\ref{lem:partial-sum}, $\sum_{i=1}^{d^2}\lambda_i^\downarrow(\rho)\mu_i^\uparrow(W)\ge0$, and therefore $\rho\in\ASN_{d-1}$. 
\end{proof}

The $\LS_p$ condition itself requires information from several eigenvalues. 
A simpler, though coarser, sufficient criterion can be obtained from 
the largest eigenvalue alone.

\begin{Cor}\label{cor:lmax}
Let $\lambda^\downarrow(\rho)=(\lambda_1,\dots, \lambda_{d^2})$.
If
\begin{equation}\label{eq:lmax}
\lambda_{1}(\rho)
\le
\frac{1}{\lceil d^2/2\rceil+1},
\end{equation}
then $\rho\in\ASN_{d-1}$.
\end{Cor}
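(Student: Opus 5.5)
Set $p=\lfloor d^2/2\rfloor$. The plan is to show that the hypothesis of Eq.~\eqref{eq:lmax} forces $\rho\in\LS_p$, and then apply Theorem~\ref{thm:main}. Write $q=d^2-p=\lceil d^2/2\rceil$. By definition, $\rho\in\LS_p$ means that $\lambda_1$ is at most the sum $\Lambda:=\sum_{i=q+1}^{d^2}\lambda_i$ of the $p$ smallest eigenvalues. So it suffices to prove $\lambda_1\le\Lambda$ whenever $\lambda_1\le 1/(q+1)$.

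The key step is a simple averaging bound. Each of the $q$ largest eigenvalues is at most $\lambda_1$, hence
\[
\sum_{i=1}^{q}\lambda_i\le q\lambda_1 .
\]
Since $\Tr\rho=1$, this gives
\[
\Lambda=1-\sum_{i=1}^{q}\lambda_i\ge 1-q\lambda_1 .
\]
The condition $\lambda_1\le 1/(q+1)$ is equivalent to $(q+1)\lambda_1\le1$, that is, $\lambda_1\le 1-q\lambda_1$. Combining the two,
\[
\lambda_1\le 1-q\lambda_1\le\Lambda,
\]
so $\rho\in\LS_{\lfloor d^2/2\rfloor}$.

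Theorem~\ref{thm:main} with $p=\lfloor d^2/2\rfloor$, which is admissible since $1\le\lfloor d^2/2\rfloor$ for $d\ge2$, then gives $\rho\in\ASN_{d-1}$.

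There is no real obstacle; the only care needed is the floor/ceiling bookkeeping, namely $d^2-\lfloor d^2/2\rfloor=\lceil d^2/2\rceil$, which is exactly the denominator $\lceil d^2/2\rceil+1$ in Eq.~\eqref{eq:lmax}. One could also note that this choice of $p$ is optimal: the larger $p$ is, the weaker the $\LS_p$ condition, so taking the maximal admissible $p$ gives the best largest-eigenvalue threshold obtainable this way.
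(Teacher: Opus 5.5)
Your proof is correct and is essentially identical to the paper's argument. Both bound the $\lceil d^2/2\rceil$ largest eigenvalues by $\lambda_1$ and use $\Tr\rho=1$ to get $\sum_{i=\lceil d^2/2\rceil+1}^{d^2}\lambda_i\ge 1-\lceil d^2/2\rceil\lambda_1\ge\lambda_1$, which gives $\rho\in\LS_{\lfloor d^2/2\rfloor}$, and then apply Theorem~\ref{thm:main}.
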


\begin{proof}

Suppose $\lambda_1(\rho)\le\frac{1}{\lceil d^2/2\rceil+1}$.
Then $1-\lceil d^2/2\rceil\lambda_1
\ge\lambda_1$.
So,
\[
\sum_{i=\lceil d^2/2\rceil+1}^{d^2}\lambda_i
=1-\sum_{i=1}^{\lceil d^2/2\rceil}\lambda_i
\ge1-\lceil d^2/2\rceil\lambda_1
\ge\lambda_1.
\]
Thus $\rho\in\LS_{\lfloor d^2/2\rfloor}$.
Therefore $\rho\in\ASN_{d-1}$ by Theorem~\ref{thm:main}.
\end{proof}


\section{Application to the reduction criterion}\label{sec:application}

We now apply Theorem~\ref{thm:main} to absolute reduction positivity.
The known inclusion $\ARED\subseteq\LS_{2d-1}$~\cite{Jivulescu2015} and the inequality $2d-1\le\lfloor d^2/2\rfloor$ imply $\ARED\subseteq\ASN_{d-1}$ immediately for $d\ge4$.

The case $d=3$ is different.
Since for $d=3$, $\lfloor d^2/2\rfloor =4$, Theorem~\ref{thm:main} implies $\LS_4\subseteq \ASN_2$ while only $\ARED \subseteq \LS_5$ has been known from~\cite{Jivulescu2015}.
Nevertheless, we can prove that $\ARED \subseteq \ASN_2.$

\subsection{The two-qutrit case}

Let $\lambda_1\ge\lambda_2\ge\cdots\ge\lambda_9\ge0$ be the eigenvalues of a two-qutrit state $\rho\in\ARED$.
Choose Schmidt probabilities $(u,v,0)$ with $u\ge v>0$ and $u+v=1$.
The increasing spectrum of $(\operatorname{id}\otimes R)\proj{\psi_{u,v}}$
is $(-\sqrt{uv},0,0,0,v,v,\sqrt{uv},u,u)$, where $\ket{\psi_{u,v}} =\sqrt{u}\ket{00} + \sqrt{v}\ket{11}$.
The ARED criterion with respect to $(\operatorname{id}\otimes R)\proj{\psi_{u,v}}$ gives 
\[-\sqrt{uv}\lambda_1+v(\lambda_5+\lambda_6)+\sqrt{uv}\lambda_7+u(\lambda_8+\lambda_9)\ge0.
\]

With $t=\sqrt{v/u}\in(0,1]$, division by $\sqrt{uv}$ yields
\[
\lambda_1
\le
t(\lambda_5+\lambda_6)
+\lambda_7
+\frac1t(\lambda_8+\lambda_9).
\]
Minimizing in $t$ yields
\begin{equation}\label{eq:qutrit-ared-min}
\lambda_1
\le
\lambda_7
+
2\sqrt{(\lambda_5+\lambda_6)(\lambda_8+\lambda_9)}.
\end{equation}
Thus every two-qutrit ARED state satisfies Eq.~(\ref{eq:qutrit-ared-min}).

On the other hand, by Corollary~\ref{cor:pair-product}, for every qutrit $2$-block positive witness with a negative eigenvalue,
\begin{equation}\label{eq:qutrit-pairs}
b_1b_8\ge\frac49a^2,\text{ }
b_2b_7\ge\frac49a^2,\text{ }
b_3b_6\ge\frac49a^2,\text{ }
b_4b_5\ge\frac49a^2.
\end{equation}
Equation~\eqref{eq:qutrit-ared-min} constrains the state spectrum, while Eq.~\eqref{eq:qutrit-pairs} constrains the dual witness spectrum.
The next lemma combines these two spectral constraints.

\begin{Lem}\label{lem:qutrit-pairing}
Let $\rho$ be a two-qutrit state whose decreasing spectrum satisfies Eq.~\eqref{eq:qutrit-ared-min}.
Then $\rho\in\ASN_2$.
\end{Lem}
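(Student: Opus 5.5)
The plan is to verify the spectral characterization \eqref{eq:ASN_k} for $k=2$, $d=3$ directly. If $W\ge0$ there is nothing to prove. Otherwise, by Theorem~\ref{thm:bp-inertia} we write $\mu^\uparrow(W)=(-a,b_1,\dots,b_8)$ with $0<b_1\le\cdots\le b_8$, and after rescaling we may take $a=1$. The goal is then
\[
\lambda_1\le \Phi(b):=\sum_{i=1}^{8} b_i\lambda_{i+1}.
\]
Since the hypothesis \eqref{eq:qutrit-ared-min} gives $\lambda_1\le\lambda_7+2\sqrt{(\lambda_5+\lambda_6)(\lambda_8+\lambda_9)}$, it suffices to show
\[
\Phi(b)\ge\lambda_7+2\sqrt{(\lambda_5+\lambda_6)(\lambda_8+\lambda_9)}.
\]
This must hold for every admissible $b$. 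By ``admissible'' I mean that $b$ satisfies the two families of constraints proved earlier:
\begin{itemize}
\item the product bounds \eqref{eq:qutrit-pairs}, i.e.\ $b_ib_{9-i}\ge 4/9$;
\item the partial sum bounds of Proposition~\ref{prop:Br}, i.e.\ $B_5\ge1$, $B_6\ge2$, $B_7\ge3$, $B_8\ge4$.
\end{itemize}

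For fixed $\lambda$, the admissible set of $b$ is convex: the regions $b_ib_{9-i}\ge 4/9$ with $b_i>0$ are convex, and the ordering and partial sum constraints are linear. The objective $\Phi$ is linear in $b$. So I would minimize $\Phi$ over this set and reduce to a small number of extremal configurations.

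First I would use the monotonicity $\lambda_2\ge\cdots\ge\lambda_9$ and $b_1\le\cdots\le b_8$ to lower-bound $\Phi$ by a coarser expression. Abel summation gives $\Phi=\sum_j (\lambda_{j+1}-\lambda_{j+2})B_j$. In this form, for the tail ($j\ge5$) the partial sum bounds apply, while for the head the terms $b_1\lambda_2+\dots+b_4\lambda_5$ can be bounded below by $B_4\lambda_5$. The two pieces are then:
\begin{itemize}
\item The partial sum bounds supply the linear term. For instance, $B_6-B_5=b_6$ together with $B_6\ge2$ should let the coefficient in front of $\lambda_7$ reach $1$, much as in the proof of Lemma~\ref{lem:partial-sum}.
\item The product bounds handle the ``square root'' part. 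Grouping $b_4\lambda_5+b_5\lambda_6\ge b_4(\lambda_5+\lambda_6)$ and $b_7\lambda_8+b_8\lambda_9\ge b_7(\lambda_8+\lambda_9)$, and using $b_4b_7\ge b_4b_5\ge 4/9$, AM--GM gives a term of the form $c\sqrt{(\lambda_5+\lambda_6)(\lambda_8+\lambda_9)}$.
\end{itemize}

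The main obstacle is the constant. Naive AM--GM with $b_4b_7\ge4/9$ only yields $c=4/3$, whereas $2$ is needed. The missing mass must come from the other pairs $(b_1,b_8),(b_2,b_7),(b_3,b_6)$, whose products are also at least $4/9$ and which multiply the larger eigenvalues $\lambda_2,\lambda_3,\lambda_4\ge\lambda_5$, and from the sum constraints, which prevent all $b_i$ from being small simultaneously.

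To close the gap, I would first apply AM--GM pairwise:
\[
b_i\lambda_{i+1}+b_{9-i}\lambda_{10-i}\ge\tfrac43\sqrt{\lambda_{i+1}\lambda_{10-i}}.
\]
I would then compare $\sum_i\sqrt{\lambda_{i+1}\lambda_{10-i}}$ with the target using $\lambda_2,\lambda_3,\lambda_4\ge\lambda_5$ and $\lambda_9\le\lambda_8\le\lambda_7$. If that falls short, the fallback is a case split at the minimizer of the convex program, depending on which constraints are active: product-active versus sum-active, with the boundary governed by whether $b_4\le1$. In each case the inequality should reduce to a two-variable estimate in $X=\lambda_5+\lambda_6$ and $Y=\lambda_8+\lambda_9$, checked by the same minimization over $t$ used to derive \eqref{eq:qutrit-ared-min}.
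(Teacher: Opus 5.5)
The step you list under ``to close the gap'' is exactly the paper's proof, and it does not fall short. But you leave it unverified, and that comparison is the whole content of the lemma, so as written the proposal is incomplete.

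Concretely, pair $b_1\lambda_2+b_8\lambda_9$, $b_2\lambda_3+b_7\lambda_8$, $b_3\lambda_4+b_6\lambda_7$, $b_4\lambda_5+b_5\lambda_6$, and apply AM--GM with \eqref{eq:qutrit-pairs}. This gives $\Phi(b)\ge\frac43(\sqrt{\lambda_2\lambda_9}+\sqrt{\lambda_3\lambda_8}+\sqrt{\lambda_4\lambda_7}+\sqrt{\lambda_5\lambda_6})$. What remains is to show that this dominates $\lambda_7+2\sqrt{(\lambda_5+\lambda_6)(\lambda_8+\lambda_9)}$. The missing mass comes entirely from using all four product bounds at once: at $\lambda_2=\cdots=\lambda_9$ the pairwise bound is $\frac{16}{3}\lambda_9$ against a target of $5\lambda_9$. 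The partial-sum bounds of Proposition~\ref{prop:Br}, the Abel summation and the convex-program fallback are never needed. Your first route is also lossy as set up. The grouping $b_4(\lambda_5+\lambda_6)+b_7(\lambda_8+\lambda_9)$ extracts only one product bound, and the partial sums alone cannot push the $\lambda_7$ coefficient to $1$, since $B_6\ge2$ only forces $b_6\ge1/3$.

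The comparison goes through as follows.
\begin{enumerate}
\item Since $\lambda_2,\lambda_3,\lambda_4\ge\lambda_5$, the pairwise bound is at least $\frac43\sqrt{\lambda_5}(\sqrt{\lambda_6}+\sqrt{\lambda_7}+\sqrt{\lambda_8}+\sqrt{\lambda_9})$.
\item Since $\lambda_7\le\lambda_5$ and $\lambda_5+\lambda_6\le2\lambda_5$, the right-hand side of \eqref{eq:qutrit-ared-min} is at most $\sqrt{\lambda_5}(\sqrt{\lambda_7}+2\sqrt2\sqrt{\lambda_8+\lambda_9})$.
\item It therefore suffices that $\sqrt{\lambda_7}+2\sqrt2\sqrt{\lambda_8+\lambda_9}\le\frac43(\sqrt{\lambda_6}+\sqrt{\lambda_7}+\sqrt{\lambda_8}+\sqrt{\lambda_9})$. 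Using $\lambda_6,\lambda_7\ge\lambda_8$, this reduces to $2\sqrt2\sqrt{\lambda_8+\lambda_9}\le3\sqrt{\lambda_8}+\frac43\sqrt{\lambda_9}$.
\item This is trivial if $\lambda_8=0$. Otherwise set $\theta=\sqrt{\lambda_9/\lambda_8}\in[0,1]$; squaring turns it into $1+8\theta-\frac{56}{9}\theta^2\ge0$. This holds because the function is concave and equals $1$ and $25/9$ at the endpoints.
\end{enumerate}
With this inserted, your argument is complete and coincides with the paper's.
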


\begin{proof}
This is trivial for $W\ge0$.
So assume that $W\ngeq 0$ and write $\mu^\uparrow(W)=(-a,b_1,\dots,b_8)$.
We first establish
\begin{equation}\label{eq:qutrit-pairing-bound}
\lambda_1\le\frac43\left(
\sqrt{\lambda_2\lambda_9}
+\sqrt{\lambda_3\lambda_8}
+\sqrt{\lambda_4\lambda_7}
+\sqrt{\lambda_5\lambda_6}
\right).
\end{equation}
Since $\lambda_2,\lambda_3,\lambda_4\ge\lambda_5$,
\[
\begin{aligned}
&\sqrt{\lambda_2\lambda_9}
+\sqrt{\lambda_3\lambda_8}
+\sqrt{\lambda_4\lambda_7}
+\sqrt{\lambda_5\lambda_6}\\
&\ge
\sqrt{\lambda_5}\left(
\sqrt{\lambda_6}+\sqrt{\lambda_7}+\sqrt{\lambda_8}+\sqrt{\lambda_9}
\right).
\end{aligned}
\]
Combining $\lambda_5\ge \lambda_6\ge\lambda_7$ with Eq.~\eqref{eq:qutrit-ared-min}, we obtain
\[
\lambda_1
\le
\sqrt{\lambda_5}\left(
\sqrt{\lambda_7}+2\sqrt2\sqrt{\lambda_8+\lambda_9}
\right).
\]
It therefore remains to verify
\[
\sqrt{\lambda_7}+2\sqrt2\sqrt{\lambda_8+\lambda_9}
\le
\frac43\left(
\sqrt{\lambda_6}+\sqrt{\lambda_7}+\sqrt{\lambda_8}+\sqrt{\lambda_9}
\right).
\]
Since $\lambda_6,\lambda_7\ge\lambda_8$, it is enough to show, for $\lambda_8>0$
\[
2\sqrt2\sqrt{\lambda_8+\lambda_9}
\le
3\sqrt{\lambda_8}+\frac43\sqrt{\lambda_9}.
\]
Put $\theta=\sqrt{\lambda_9/\lambda_8}\in[0,1].$
After division by $\sqrt{\lambda_8}$ and squaring, the inequality is equivalent to
\[
f(\theta)=1+8\theta-\frac{56}{9}\theta^2\ge0.
\]
$f$ is concave on $[0,1]$ and equals $1$ and $25/9$ at the endpoints.
Thus Eq.~\eqref{eq:qutrit-pairing-bound} follows.

Now pair the positive eigenvalues of $W$:
\[
\begin{aligned}
\sum_{i=1}^8b_i\lambda_{i+1}
={}&(b_1\lambda_2+b_8\lambda_9)
+(b_2\lambda_3+b_7\lambda_8)\\
&+(b_3\lambda_4+b_6\lambda_7)
+(b_4\lambda_5+b_5\lambda_6).
\end{aligned}
\]
By the arithmetic–geometric mean inequality and Eq.~\eqref{eq:qutrit-pairs},
\[
\begin{aligned}
\sum_{i=1}^8b_i\lambda_{i+1}
&\ge
\frac{4a}{3}\left(
\sqrt{\lambda_2\lambda_9}
+\sqrt{\lambda_3\lambda_8}
+\sqrt{\lambda_4\lambda_7}
+\sqrt{\lambda_5\lambda_6}
\right)\\
&\ge a\lambda_1,
\end{aligned}
\]
which implies $\sum_i\lambda_i^\downarrow\mu_i^\uparrow(W)\ge 0$.
Therefore $\rho\in\ASN_2$.
\end{proof}

\begin{Thm}\label{thm:ARED}
For every $d\ge2$,
\[
\ARED\subseteq\ASN_{d-1}.
\]
\end{Thm}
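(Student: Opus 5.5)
The plan is to split by dimension: $d=2$, $d=3$, and $d\ge4$.

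For $d\ge4$, I would use the known inclusion $\ARED\subseteq\LS_{2d-1}$ from Ref.~\cite{Jivulescu2015} and check that $2d-1\le\lfloor d^2/2\rfloor$. For $d=4$ this reads $7\le 8$. Beyond that, $d^2/2-(2d-1)$ increases in $d$, and the floor loses at most $1/2$, so the inequality persists. Theorem~\ref{thm:main} with $p=2d-1$ then gives $\LS_{2d-1}\subseteq\ASN_{d-1}$, hence $\ARED\subseteq\ASN_{d-1}$.

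For $d=3$, I would combine the two spectral estimates already derived. Let $\rho\in\ARED$ with decreasing spectrum $\lambda_1\ge\cdots\ge\lambda_9$. I apply the ARED spectral characterization to the witnesses $W_x$ with $x=(u,v,0)$. This gives the one-parameter family of inequalities above, and optimizing over $t=\sqrt{v/u}$ yields Eq.~\eqref{eq:qutrit-ared-min}.

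One point needs care here. The unconstrained minimizer $t^*=\sqrt{(\lambda_8+\lambda_9)/(\lambda_5+\lambda_6)}$ lies in $(0,1]$ because $\lambda_5+\lambda_6\ge\lambda_8+\lambda_9$, so the minimum over the admissible range is indeed attained there. The degenerate case $\lambda_8+\lambda_9=0$ is handled by letting $t\to0$, which gives $\lambda_1\le\lambda_7$ and is consistent with the formula. Lemma~\ref{lem:qutrit-pairing} then applies directly and gives $\rho\in\ASN_2$.

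For $d=2$, the target is $\ASN_1$, i.e. absolute separability, and I would treat it as follows. It is known that for qubits the reduction criterion is equivalent to PPT, since $(\mathrm{id}\otimes R)(\rho)$ is unitarily related to the partial transpose via $R(X)=\sigma_yX^T\sigma_y$. Hence $\ARED=\APPT$ in $2\otimes2$. Moreover, $\APPT$ coincides with absolute separability there, because PPT is equivalent to separability in $2\otimes2$.

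Alternatively, I could use Theorem~\ref{thm:main} directly. We have $\lfloor 4/2\rfloor=2$, and $\ARED\subseteq\LS_{2d-1}=\LS_3$ is not enough by itself. However, the qubit ARED condition reduces to the known Verstraete--Audenaert--De Moor condition $\lambda_1\le\lambda_3+2\sqrt{\lambda_2\lambda_4}$, which characterizes absolute separability. I would therefore cite the PPT/reduction equivalence as the cleanest route.

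The main obstacle is the $d=3$ step, and it is essentially done by Lemma~\ref{lem:qutrit-pairing}. What remains is verifying that the boundary of the $t$-minimization is handled correctly. For $d=2$ the only work is justifying the equivalence of reduction and PPT in $2\otimes 2$.
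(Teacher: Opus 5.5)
Your proposal is correct and follows essentially the same route as the paper. The case split is identical: $d\ge4$ via $\ARED\subseteq\LS_{2d-1}$ and Theorem~\ref{thm:main}; $d=3$ via Eq.~\eqref{eq:qutrit-ared-min} and Lemma~\ref{lem:qutrit-pairing}; $d=2$ via reduction $\Leftrightarrow$ PPT $\Leftrightarrow$ separability. Your extra checks fill in details the paper leaves implicit: that the minimizer $t^*$ lies in $(0,1]$, the degenerate case $\lambda_8+\lambda_9=0$ handled by letting $t\to0$, and the qubit identity $R(X)=\sigma_y X^T\sigma_y$.
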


\begin{proof}
Let $\rho\in\ARED$.
For $d\ge4$, the result follows from $\ARED\subseteq\LS_{2d-1}$ and Theorem~\ref{thm:main}. 
For $d=3$, the spectrum of $\rho$ satisfies Eq.~\eqref{eq:qutrit-ared-min}. 
Hence $\rho\in \ASN_2$ by Lemma~\ref{lem:qutrit-pairing}.
Finally, for $d=2$, reduction positivity is equivalent to PPT, which is equivalent to separability~\cite{Peres1996,HorodeckiEtAl1996,HorodeckiReduction1999}. 
Hence $\rho \in \mathrm{ARED}$ is separable, that is, $\rho \in \mathrm{ASN}_1$.
\end{proof}

Similar to ARED, APPT is defined as the set of states $\rho$ satisfying $(\operatorname{id} \otimes T)(U\rho U^\dagger) \ge 0$ for all unitary $U$, where $T$ is the transpose map.
Since $\APPT\subseteq\ARED$, Theorem~\ref{thm:ARED} recovers the earlier exclusion of maximal Schmidt number for APPT states obtained in~\cite{HuberLamiLancienMullerHermes2018}, while extending it to the larger class ARED.

\begin{Rem}
In the qutrit setting, $\LS_4$ and the condition in
Eq.~\eqref{eq:qutrit-ared-min} are incomparable.
Let $\lambda^\downarrow(\rho)=(\lambda_1,\dots,\lambda_9)$ be the spectrum of $\rho$.
First, consider
\[
(\lambda_1,\dots,\lambda_9)
=
\frac{1}{10}(2,2,2,1,1,1,1,0,0).
\]
Then
\[
\lambda_1
=
\lambda_6+\lambda_7+\lambda_8+\lambda_9
=
\frac{1}{5},
\]
so $\rho$ belongs to $\LS_4$.
On the other hand,
\[
\lambda_7
+
2\sqrt{(\lambda_5+\lambda_6)(\lambda_8+\lambda_9)}
=
\frac{1}{10}
<
\lambda_1.
\]
Hence Eq.~\eqref{eq:qutrit-ared-min} does not hold.

Conversely, consider
\[
(\lambda_1,\dots,\lambda_9)
=
\frac{1}{21}(5,5,5,1,1,1,1,1,1).
\]
Then
\[
\lambda_6+\lambda_7+\lambda_8+\lambda_9
=
\frac{4}{21}
<
\frac{5}{21}
=
\lambda_1,
\]
so $\rho\notin\LS_4$. 
However,
\[
\lambda_7
+
2\sqrt{(\lambda_5+\lambda_6)(\lambda_8+\lambda_9)}
=
\frac{1}{21}
+
2\sqrt{\frac{2}{21}\frac{2}{21}}
=
\frac{5}{21}
=
\lambda_1.
\]
Thus Eq.~\eqref{eq:qutrit-ared-min} holds.
Hence Lemma~\ref{lem:qutrit-pairing} cannot be deduced from Theorem~\ref{thm:main} alone, and the product bounds in Eq.~\eqref{eq:qutrit-pairs} are genuinely needed in the qutrit case.
\end{Rem}

\subsection{Relation with the $\LS_p$ hierarchy}

It was shown in~\cite{Jivulescu2015} that, for $d \ge 3$,
\[
\APPT\subseteq\LS_3\subseteq\LS_d\subseteq\ARED\subseteq\LS_{2d-1}.
\]

For $d \ge 4$, we have $2d-1 \le \lfloor d^2/2 \rfloor$, so Theorem~\ref{thm:main} extends this hierarchy to
\begin{equation*}
\begin{aligned}
\APPT
&\subseteq\LS_3
\subseteq\LS_d
\subseteq\ARED
\subseteq\LS_{2d-1}\\
&\subseteq\LS_{\lfloor d^2/2\rfloor}
\subseteq\ASN_{d-1},
\end{aligned}
\end{equation*} 
which becomes 
\[
\APPT \subseteq\LS_3
\subseteq\ARED
\subseteq\ASN_{2},
\]
for $d=3$, and 
$\APPT = \ARED = \ASN_1 = \text{ASEP}$ for $d=2$, where ASEP is the set of all absolutely separable states.

To show that every inclusion in this chain is strict, we use a slight modification of the pseudo-pure states $\rho_\alpha = (I + \alpha P)/(d^2 + \alpha)$ in Refs.~\cite{Jivulescu2015,AbellanetVidalEtAl2026}.

\begin{Example}
For $d\ge4$, let $P,Q$ be rank-one
projections with $PQ=0$. 
Define
\[
\rho_{\alpha,t}
:=
\frac{I+\alpha P-(1-t)Q}{d^2+\alpha-1+t},
\quad
\alpha\ge0,\; 0<t\le1.
\]
Its spectrum is proportional to $(1+\alpha,1,\dots,1,t)$
and therefore
$\rho_{\alpha,t}\in\LS_p$ if and only if
$\alpha\le p-2+t$.

Fix $t=t_d:=(d-1)/d$, then
$\rho_{\alpha,t_d}\in\LS_p$ if and only if
$\alpha\le p-1-1/d$.

For $\alpha=2-1/d$, the state belongs to $\LS_3$.
On the other hand, the APPT necessary condition
$\lambda_1\le\lambda_{d^2-1}
+2\sqrt{\lambda_{d^2-2}\lambda_{d^2}}$~\cite{Hildebrand2007PPT}
reduces to $\alpha\le2\sqrt{t_d}$.
Since
\[
2-\frac1d
=
1+t_d
>
2\sqrt{t_d},
\]
we have
$\rho_{2-1/d,t_d}\in\LS_3\setminus\APPT$.

Taking $\alpha=d-1-1/d$,
\[
\rho_{d-1-1/d,t_d}\in\LS_d\setminus\LS_3.
\]

To treat $\ARED$, observe that
\[
I+\alpha P-(1-t)Q
=
t\left(I+\frac{\alpha}{t}P\right)+(1-t)(I-Q).
\]
Hence $\rho_{\alpha,t}$ is a convex combination of a
pseudo-pure state with parameter $\alpha/t$ and the state
$(I-Q)/(d^2-1)$, with weights $(td^2+\alpha)/(d^2+\alpha-1+t)$ and $(1-t)(d^2-1)/(d^2+\alpha-1+t)$.
Since the former belongs to $\ARED$ if
$\alpha/t\le d$ by Ref.~\cite{Jivulescu2015}, while
$(I-Q)/(d^2-1)\in\LS_d\subseteq\ARED$, by the convexity of ARED,
\[
\alpha\le dt
\Longrightarrow
\rho_{\alpha,t}\in\ARED.
\]
Thus, for $\alpha=d-1-1/(2d)$,
\[
\rho_{d-1-1/(2d),t_d}
\in
\ARED\setminus\LS_d.
\]

Next take $\alpha=d$. 
Since
$d\le2d-2-1/d$, we have
$\rho_{d,t_d}\in\LS_{2d-1}$.
Every ARED spectrum satisfies that the maximal eigenvalue is less than or equal to $1/d$, 
whereas the maximal eigenvalue of $\rho_{d,t_d}$ equals
\[
\frac{d+1}{d^2+d-1/d},
\]
which is strictly greater than $1/d$.
Hence
\[
\rho_{d,t_d}
\in
\LS_{2d-1}\setminus\ARED.
\]

For $\alpha=2d-2$, we have
$\alpha>2d-2-1/d$, so
$\rho_{2d-2,t_d}\notin\LS_{2d-1}$.
Since $\lfloor d^2/2\rfloor\ge2d$ for $d\ge4$,
$2d-2\le \lfloor d^2/2\rfloor-1-1/d$, and therefore
\[
\rho_{2d-2,t_d}
\in
\LS_{\lfloor d^2/2\rfloor}\setminus\LS_{2d-1}.
\]

Finally, take $\alpha=\lfloor d^2/2\rfloor-1$. 
Then
$\rho_{\lfloor d^2/2\rfloor-1,t_d}\notin\LS_{\lfloor d^2/2\rfloor}$.
Using the same convex decomposition and the pseudo-pure criterion
$\rho_{\beta,1}\in\ASN_{d-1}$ whenever
$\beta\le 2(d^2-d-1)$ by Ref.~\cite{AbellanetVidalEtAl2026}, the convexity of $\ASN_{d-1}$ gives
\[
\alpha\le t(2(d^2-d-1))
\Longrightarrow
\rho_{\alpha,t}\in\ASN_{d-1}.
\]
Since
\[
\lfloor d^2/2\rfloor-1
<
t_d(2(d^2-d-1))
\]
for every $d\ge4$, it follows that
\[
\rho_{\lfloor d^2/2\rfloor-1,t_d}
\in
\ASN_{d-1}\setminus\LS_{\lfloor d^2/2\rfloor}.
\]

Thus every inclusion above is strict.
\end{Example}


%

\subsection{The global unitary requirement cannot be removed}

Ordinary reduction positivity does not exclude maximal Schmidt number, even when the reduction criterion is imposed on both subsystems.

\begin{Thm}\label{thm:maxred}
For every $d\ge3$, there exists a bipartite state $\rho\in\M_d\otimes\M_d$ such that
\[
\rho\in\RED_A\cap\RED_B
\text{ and }
\SN(\rho)=d.
\]
Equivalently, $\RED_A\cap\RED_B
\not\subseteq
\SN_{d-1}.$
\end{Thm}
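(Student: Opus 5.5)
The plan is to give an explicit family of states for each $d\ge3$ and then verify both properties. First, it is worth recording why the obvious candidates fail, because this determines the shape of the construction.

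Isotropic states do not work: for them, reduction positivity is equivalent to separability. More generally, no fidelity-type witness can work:
\begin{itemize}
\item For any $\ket{\psi}=\sum_i\sqrt{s_i}\ket{ii}$ and any $\rho\in\RED_B$, taking the expectation of $\rho_A\otimes I-\rho\ge0$ in $\ket{\psi}$ gives
\[
\bra{\psi}\rho\ket{\psi}\le\sum_i s_i(\rho_A)_{ii}\le \max_i s_i .
\]
\item A fidelity witness for maximal Schmidt number needs $\bra{\psi}\rho\ket{\psi}>1-\min_i s_i$.
\item For $d\ge3$ we have $\max_i s_i+\min_i s_i<1$, so this threshold can never be reached.
\item Likewise, maps of the form $X\mapsto\Tr(X)I-pX$ with $p\le 1/(d-1)$ cannot help. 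For them $(\mathrm{id}\otimes\phi_p)(\rho)=(1-p)\rho_A\otimes I+p(\rho_A\otimes I-\rho)$, which is automatically positive on $\RED_B$.
\end{itemize}
So the certificate of $\SN(\rho)=d$ must come from a $(d-1)$-block positive witness that is not of reduction type.

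I would therefore use a generalized Choi/Horodecki-type ansatz with cyclic symmetry. Let $\ket{\psi_+}=\frac1{\sqrt d}\sum_i\ket{ii}$ and define
\[
\rho=p\,\proj{\psi_+}+\frac{q}{d}\sum_{i}\proj{i,i+1}+\frac{r}{d}\sum_i\proj{i,i-1},
\]
with indices mod $d$ and $p+q+r=1$. Then $\rho_A=\rho_B=I/d$, so $\rho_A\otimes I-\rho$ and $I\otimes\rho_B-\rho$ are block diagonal. One block is the span of $\{\ket{ii}\}$, where the operator is $\frac1d\bigl(I-p\,d\,\proj{\psi_+}\bigr)$ restricted to $\mathrm{span}\{\ket{ii}\}$. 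The remaining blocks are one-dimensional on each $\ket{ij}$ with $i\neq j$. Hence $\RED_A\cap\RED_B$ reduces to the scalar inequalities
\[
p\le\tfrac1d,\qquad q,r\le 1 .
\]
The second condition is automatic, and the symmetry between the two sides is immediate. So every $p\le 1/d$ is admissible, and the freedom lies in the asymmetric split between $q$ and $r$.

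For the Schmidt number, I would take the witness $W=(\mathrm{id}\otimes\Phi)(d\,\proj{\psi_+})$. Here $\Phi$ is a generalized Choi map
\[
\Phi(X)=\alpha\,\epsilon(X)+\sum_{m=1}^{d-1}c_m\,\epsilon(S^mXS^{-m})-X,
\]
where $\epsilon$ is the diagonal projection and $S$ is the cyclic shift. The weights are chosen so that $\Phi$ is $(d-1)$-positive but not $d$-positive. Such parameter ranges are available from the Chru\'sci\'nski--type analysis of cyclic maps. Then $W\in\BP_{d-1}$, and since $W$ is diagonal plus a multiple of $\proj{\psi_+}$, computing $\Tr(W\rho)$ is a one-line linear expression in $p,q,r$. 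I would put all the $\RED$-noise weight on the shift that $\Phi$ penalizes least, for example $r=0$ and $q=1-p$, set $p=1/d$, and aim to show $\Tr(W\rho)<0$. By the duality $\SN(\rho)\le d-1\iff\Tr(W\rho)\ge0$ for all $W\in\BP_{d-1}$, this gives $\SN(\rho)=d$. For $d=3$, this is precisely the regime of the Horodecki $\alpha$-states with $\alpha$ near $5$, which are known to be entangled and to satisfy reduction.

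The main obstacle is certifying $(d-1)$-positivity of $\Phi$ for a choice of weights strong enough to make $\Tr(W\rho)<0$ at $p=1/d$. If the known cyclic-map results are not sharp enough, I would instead prove $(d-1)$-block positivity of $W$ directly. For a Schmidt-rank-$(d-1)$ vector $\vket{X}$, write $\vbra{X}W\vket{X}$ in terms of the entries of $X$. Then use $\rank X\le d-1$, in the same spirit as Lemma~\ref{lem:det-energy}: $\det X=0$ lets one bound $|\Tr X|^2$ by $(d-1)\sum_{i\neq j}$-type cyclic sums of $|x_{ij}|^2$. If needed, I would also enlarge the ansatz with a further weight on the shift $S^2$ to gain slack.
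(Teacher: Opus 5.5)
Your ansatz cannot work, whatever witness you choose: every state in your cyclic family that satisfies reduction already has Schmidt number at most $d-1$. For $\theta\in\mathbb{R}^d$ with $\sum_j\theta_j\equiv d\pi\pmod{2\pi}$, put $\ket{v_\theta}=\sum_j\ket{jj}+\sum_je^{\mathrm{i}\theta_j}\ket{j,j+1}$. Its matrix is $I+D_\theta P$, where $D_\theta=\operatorname{diag}(e^{\mathrm{i}\theta_j})$ and $P=\sum_j\ket{j}\bra{j+1}$. Since $(D_\theta P)^d=e^{\mathrm{i}\sum_j\theta_j}I$, one gets $\det(I+D_\theta P)=1-(-1)^de^{\mathrm{i}\sum_j\theta_j}=0$, so $\SR(\ket{v_\theta})\le d-1$. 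Now average $\proj{v_\theta}$ over independent uniform $\theta_1,\dots,\theta_{d-1}$, with $\theta_d$ fixed by the constraint. All cross terms vanish, because $e^{\mathrm{i}\theta_j}$ and $e^{\mathrm{i}(\theta_j-\theta_l)}$, $j\neq l$, average to zero. The result is $d\proj{\psi_+}+\sum_j\proj{j,j+1}$, so $\frac12\proj{\psi_+}+\frac1{2d}\sum_j\proj{j,j+1}\in\SN_{d-1}$. Mixing with the product states $\proj{j,j\pm1}$, together with the mirror construction using $\ket{j,j-1}$, puts every member of your family with $p\le\frac12$ in $\SN_{d-1}$. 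Reduction forces $p\le 1/d<\frac12$, so there is nothing to detect. Consistently, evaluating your $W$ on $\ket{v_\theta}$ forces $\alpha+c_1\ge d$, and together with $c_1\ge0$ this already gives $\Tr(W\rho)=p(\alpha-d)+qc_1\ge0$ at $r=0$. The same averaging works for any shift $S^m$ (impose the constraint on one cycle of $j\mapsto j+m$), so adding weight on $S^2$ does not help. Your $d=3$ anchor is also mistaken. The Horodecki state $\sigma_5=\frac27\proj{\psi_+}+\frac5{21}\sum_j\proj{j,j+1}$ is entangled (even NPT), but it has $p=\frac27<\frac12$ and hence Schmidt number $2$, not $3$; entanglement only gives $\SN\ge2$.

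The missing idea is a range in which rank-deficient vectors cannot cover the entangled direction. In your family the weighted shift $D_\theta P$ is not nilpotent, which is exactly why $I+D_\theta P$ can be singular. The paper instead uses low-rank states. For odd $d$ the range is $\operatorname{span}\{\vket{I},\vket{N_1},\dots,\vket{N_{d-1}}\}$, with every $\sum_jx_jN_j$ nilpotent. Then $\det(aI+\sum_jx_jN_j)=a^d$, all singular range vectors lie in a proper subspace, and $\SN=d$ follows from a range argument. For even $d$ the paper uses a linear functional on range coordinates that is nonnegative on singular vectors but negative when summed against the Gram matrix. Two-sided reduction is then obtained with marginals that are not maximally mixed, by checking $(\rho_A^{-1/2}\otimes I)\rho(\rho_A^{-1/2}\otimes I)\le I$ and its $B$-side analogue. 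Your remarks that fidelity and $\Tr(X)I-pX$ witnesses cannot work are correct, but they only constrain the witness; the real obstruction lies in your state family.
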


The explicit construction and the proof are given in Appendix~\ref{app:maximal-SN-RED}.
Thus, in contrast to $\ARED\subseteq\ASN_{d-1}$, ordinary reduction positivity is compatible with maximal Schmidt number in every dimension $d\ge3$.

\section{Discussion}\label{sec:discussion}

We studied spectral constraints on $(d-1)$-block positive operators in order to obtain spectrum-only obstructions to maximal Schmidt number. 
Our main result is that $\LS_p\subseteq\ASN_{d-1}$ for every $p\le\lfloor d^2/2\rfloor$; that is, whenever the largest eigenvalue of a state does not exceed the sum of its $p$ smallest eigenvalues, the spectrum alone guarantees that its Schmidt number is not maximal.
This also yields a simple sufficient condition involving only the largest eigenvalue (Corollary~\ref{cor:lmax}).

The spectral inclusion has a direct application to $\ARED$. 
For $d\ge4$, the known inclusion $\ARED\subseteq\LS_{2d-1}$ lies inside the range of the general theorem. 
The qutrit case falls just outside this range and requires additional spectral information. 
The complex symmetric bilinear form $\beta$ defined in Eq.~\eqref{eq:takagi-setup} provides the paired inequalities in Eq.~\eqref{eq:qutrit-pairs}, which, combined with the spectral condition in Eq.~\eqref{eq:qutrit-ared-min} satisfied by qutrit ARED states, complete the proof.
Consequently, $\ARED\subseteq\ASN_{d-1}$ holds in every dimension,  extending the corresponding result for APPT states~\cite{HuberLamiLancienMullerHermes2018}.

The contrast with ordinary reduction positivity is sharp. 
For every $d\ge3$, Theorem~\ref{thm:maxred} gives a state of maximal Schmidt number satisfying the reduction criterion on both subsystems. 
Thus the spectral obstruction to maximal Schmidt number is a genuinely absolute phenomenon rather than a consequence of the reduction criterion alone.

Several questions remain. 
Theorem~\ref{thm:main} provides a sufficient range but does not determine the largest $p$ for which $\LS_p\subseteq\ASN_{d-1}$, whereas it can be shown that $\LS_p\not\subseteq\ASN_{d-1}$ if $p\ge \frac{d^2+d}{2}$~\cite{JSpre}.
The qutrit case is of particular interest: Theorem~\ref{thm:main} covers $\LS_4$, while $\ARED \subseteq \LS_5$, so it is natural to ask whether $\LS_5 \subseteq \ASN_2$.
More generally, it would be interesting to determine analogous $\LS_p$ ranges for $\ASN_k$ with $k<d-1$, and to understand whether ordinary PPT states can have maximal Schmidt number. 
Although PPT states with large Schmidt number are known~\cite{HuberLamiLancienMullerHermes2018,johnston2026ppt}, this question remains open.

\noindent{\bf AI Statement.}
ChatGPT (OpenAI), including GPT-5.5 and GPT-5.6 Sol, was used for language editing, LaTeX assistance, and mathematical discussion related to the proofs of Lemmas~\ref{lem:normal-subspace} and~\ref{lem:takagi-lower} and the construction of states with maximal Schmidt number satisfying the reduction criterion on both subsystems. The authors independently verified the mathematical suggestions and take full responsibility for the final manuscript.

\begin{acknowledgments}
This research was supported by the Institute of Information \& Communications Technology Planning \& Evaluation (IITP) grant funded by the Ministry of Science and ICT (MSIT) (No. RS-2025-02304540).
S.L. acknowledges support from the National Research Foundation of Korea (NRF) of Korea grants funded by the MSIT  (No. RS-2024-00432214 and No. RS-2022-NR068791) and Creation of the Quantum Information Science R\&D Ecosystem (No. RS-2023-NR068116) through the NRF funded by the MSIT.
\end{acknowledgments}

\bibliographystyle{apsrev4-2}
\bibliography{aredsnfs}

\appendix


\section{Proof of Theorem~\ref{thm:bp-inertia}}\label{app:bp-inertia}

In this appendix, we prove Theorem~\ref{thm:bp-inertia}.

\begin{proof}
Assume that $\vket{Y}\in\ker W$.
We show that $Y=0$.

Let $\vket{X}$ be the negative eigenvector of $W$.
Since $W$ is $(d-1)$-block positive, $X$ is invertible.
Since
\[
\vbra{X}W\vket{X}<0,
\]
continuity ensures the existence of a nonempty open neighborhood $\mathcal U\subset GL_d$ of $X$ such that
\[
\vbra{Z}W\vket{Z}<0
\]
for every $Z\in\mathcal U$.

For the fixed matrix $Y$, define
\[
\mathcal N_Y
=
\left\{
Z\in GL_d:
Z^{-1}Y\text{ is nilpotent}
\right\}.
\]
We first claim that every $Z\in GL_d$ satisfying
\(\vbra ZW\vket Z<0\) belongs to $\mathcal N_Y$.
Indeed, let $t$ be an eigenvalue of $Z^{-1}Y$.
Then
\[
\det(tZ-Y)
=
\det(Z)\det(tI-Z^{-1}Y)
=
0,
\]
so $tZ-Y$ is singular and
\[
\SR(\vket{tZ-Y})\le d-1.
\]
Because $W\vket{Y}=0$,
\[
\vbra{tZ-Y}W\vket{tZ-Y}
=
|t|^2\vbra{Z}W\vket{Z}.
\]
If $\vbra{Z}W\vket{Z}<0$, block positivity therefore forces $t=0$.
Every eigenvalue of $Z^{-1}Y$ is zero, hence $Z^{-1}Y$ is nilpotent.
Thus
\[
\left\{
Z\in GL_d:
\vbra{Z}W\vket{Z}<0
\right\}
\subseteq
\mathcal N_Y.
\]

Suppose now that $Y\ne0$.
Write
\[
\det(tZ-Y)
=
q_d(Z)t^d+q_{d-1}(Z)t^{d-1}+\cdots+q_0(Z),
\]
where each $q_j(Z)$ is a polynomial in the entries of $Z$ and $q_d(Z)=\det Z$.
For $Z\in GL_d$,
\[
Z^{-1}Y\text{ is nilpotent}
\Longleftrightarrow
\det(tZ-Y)=\det(Z)t^d.
\]
Hence
\[
\mathcal N_Y
=
GL_d\cap
\left\{
q_0=\cdots=q_{d-1}=0
\right\}.
\]
This algebraic subset is proper.
To see this, the linear functional
\[
C\longmapsto\Tr(CY)
\]
is nonzero because $Y\ne0$.
Since $GL_d$ is dense in $\M_d$, there exists an invertible $C$ with $\Tr(CY)\ne0$.
For $Z=C^{-1}$ one has
\[
\Tr(Z^{-1}Y)=\Tr(CY)\ne0,
\]
so $Z^{-1}Y$ is not nilpotent and $Z\notin\mathcal N_Y$.
Therefore $\mathcal N_Y$ is a proper algebraic subset of $GL_d$ and has empty Euclidean interior~\cite{bochnak2013real}.

The open set $\mathcal U$ cannot be contained in $\mathcal N_Y$.
Choose
\[
Z'\in\mathcal U\setminus\mathcal N_Y.
\]
Then $\vbra{Z'}W\vket{Z'}<0$, so the first part of the proof implies $Z'\in\mathcal N_Y$, a contradiction.
Hence $Y=0$ and
\[
\ker W=\{0\}.
\]
\end{proof}

\section{Proof of Lemma~\ref{lem:normal-subspace}}\label{app:normal}

In this appendix, we prove Lemma~\ref{lem:normal-subspace}.

Regard $\M_d$ as a real vector space of dimension $2d^2$ and define
\[
\mathcal S:=\{H+itI_d:H=H^\dagger,\ t\in\mathbb R\}.
\]
Every matrix in $\mathcal S$ is normal, and $\dim_{\mathbb R}\mathcal S=d^2+1$. 
If $m=\dim_{\C}V$, then $\dim_{\mathbb R}V=2m$.
Therefore
\[
\dim_{\mathbb R}(V\cap\mathcal S)
\ge2m+(d^2+1)-2d^2
=2m-d^2+1.
\]
For $m\ge\lceil d^2/2\rceil$, the right-hand side is at least one. 
Hence $V\cap\mathcal S$ contains a nonzero normal matrix of the stated form $H+cI_d$.

\section{The matrix representation of \texorpdfstring{$\beta$}{beta}}
\label{appen:S_beta}

We collect here the coordinate facts used in Lemma~\ref{lem:takagi-lower}.
Let $\mathcal F=\{F_1,\ldots,F_{d^2-1}\}$ and
$\mathcal G=\{G_1,\ldots,G_{d^2-1}\}$
be two Hilbert-Schmidt orthonormal bases of $A^\perp$.
Define
\[
C_{ij}=\Tr(F_i^\dagger G_j).
\]
Then we can readily show that $C=(C_{ij})$ is unitary and
\[
G_j=\sum_{i=1}^{d^2-1} C_{ij}F_i.
\]
Consequently,
\[
S_\beta^{\mathcal G}=C^T S_\beta^{\mathcal F}C.
\]
Thus the coordinate matrices of $\beta$ associated with two Hilbert-Schmidt orthonormal bases are related by unitary congruence and have the same singular values.

We also justify the reduction to the case where $A$ is diagonal.
Let $A=U\Sigma V^\dagger$ be a singular value decomposition and define
\[
\mathcal T(Y)=U^\dagger YV.
\]
Then $\mathcal T$ is a Hilbert--Schmidt isometry from $A^\perp$ onto $\Sigma^\perp$.
Moreover,
\[
\beta(Y,Z)
=
\frac1d\Tr\left(
\Sigma^{-1}\mathcal T(Y)\Sigma^{-1}\mathcal T(Z)
\right).
\]
Hence an orthonormal basis of $A^\perp$ may be transported to an orthonormal basis of $\Sigma^\perp$ without changing the coordinate matrix of the bilinear form.
Therefore, in the proof of Lemma~\ref{lem:takagi-lower}, we may assume
\[
A=\Sigma=\operatorname{diag}(s_1,\ldots,s_d),
\]
where $s_i>0$ and $\sum_i s_i^2=1$.





\section{Proof details for Lemma~\ref{lem:takagi-lower}}
\label{appen:pf-takagi-lower}

We only need to justify the estimate on the diagonal part.
By Appendix~\ref{appen:S_beta}, we may assume that $A=\operatorname{diag}(s_1,\ldots,s_d)$.
Identify diagonal matrices with vectors in $\C^d$ and let $\ket{s}=(s_1,\ldots,s_d)$.
Choose a real orthonormal basis of $\ket{s}^\perp=\{\ket{x}:\braket{x|s}=0\}$ and use the corresponding diagonal matrices as a Hilbert-Schmidt orthonormal basis of $\mathcal D$.

In this basis, the matrix representing the restriction of $\beta$ to $\mathcal D$ is real symmetric.
For $D=\operatorname{diag}(z_1,\ldots,z_d)\in\mathcal D$,
\[
\beta(D^*,D) = \frac1d\sum_{i=1}^d\frac{|z_i|^2}{s_i^2}.
\]
Thus it suffices to prove
\begin{equation}\label{eq:appendix-diagonal-bound}
\sum_i\frac{|z_i|^2}{s_i^2}\ge2\sum_i|z_i|^2
\end{equation}
under the constraint $\sum_i s_i z_i=0$.

Set $t_i=s_i^2$ and $z_i=s_i x_i$.
Then $\sum_i t_i=1$ and $\sum_i t_i x_i=0$ and Eq.~\eqref{eq:appendix-diagonal-bound} becomes
\[
\sum_i|x_i|^2\ge2\sum_i t_i|x_i|^2.
\]
If $t_i\le1/2$ for every $i$, the claim is immediate.

Otherwise, after relabeling, let $t_1=t>\frac12$ and $T=1-t$.
Then $t_i\le T$ for $i\ge2$.
Put
\[
R=\sum_{i=2}^d t_i|x_i|^2,
\;
S=\sum_{i=2}^d|x_i|^2.
\]
Since $\sum_i t_i x_i=0$, by the Cauchy-Schwarz inequality,
\[
t^2|x_1|^2
=\left|\sum_{i=2}^d t_ix_i\right|^2
\le TR,
\]
while $R\le TS$.
Therefore
\[
\begin{aligned}
\sum_i|x_i|^2-2\sum_i t_i|x_i|^2
&=S-2R-(2t-1)|x_1|^2\\
&\ge\frac{R}{T}-2R-\frac{(2t-1)T}{t^2}R\\
&=\frac{(2t-1)^2}{t^2T}R
\ge0.
\end{aligned}
\]

\section{Two-sided reduction-positive states with maximal Schmidt number}\label{app:maximal-SN-RED}

In this appendix, we construct states in $\RED_A\cap\RED_B$ with maximal Schmidt number in three cases: odd $d$, $d=4$, and even $d\ge6$.
We first define the states.

\subsection{Construction of the states}

First, let $d=2k+1$ with $k\ge1$. For $j=1,\ldots,k$, define
\[
\begin{aligned}
N_{2j-1}&=\ket{2j-1}\bra{2k+1}+\ket{2k+1}\bra{2j},\\
N_{2j}&=\ket{2j}\bra{2k+1}-\ket{2k+1}\bra{2j-1}.
\end{aligned}
\]
Let $w=\frac{2k-1+\sqrt{(2k-1)^2+4}}{2}$, so that $w^2-(2k-1)w-1=0$. 
Define
\[
\rho^{(2k+1)}
=
\frac{1}{2k+1+4kw}
\left(
\vket{I_{2k+1}}\vbra{I_{2k+1}}
+w\sum_{j=1}^{2k}\vket{N_j}\vbra{N_j}
\right).
\]

For $d=4$, let
\[
\begin{aligned}
A_0&=\frac12I_4,&
A_1&=\frac{\ket3\bra1-\ket4\bra2}{\sqrt2},&
A_2&=\frac{\ket1\bra2+\ket3\bra4}{\sqrt2},\\
A_3&=\ket2\bra3,&
A_4&=\ket3\bra2.
\end{aligned}
\]
Set $V=
[\vket{A_0},\vket{A_1},\vket{A_2},\vket{A_3},\vket{A_4}]$ and let
\[
G
=
\frac1{1000}
\begin{pmatrix}
204&0&0&-35&-60\\
0&210&0&0&0\\
0&0&210&0&0\\
-35&0&0&295&-94\\
-60&0&0&-94&81
\end{pmatrix}.
\]
After a permutation of coordinates,
\[
1000G=210\oplus210\oplus 
\begin{pmatrix}
204&-35&-60\\
-35&295&-94\\
-60&-94&81
\end{pmatrix}.
\]
The matrices \(A_0,\ldots,A_4\) are Hilbert-Schmidt orthonormal, and \(\Tr G=1\). 
Direct calculation shows that \(G>0\), so that $\rho^{(4)}:=VGV^\dagger=\sum_{i,j=0}^4G_{ij}\vket{A_i}\vbra{A_j}$ is a state.

Finally, let $d=2k+2$ with $k\ge2$. 
Define
\[
\begin{aligned}
C_0={}&3\ket1\bra1-\frac23\ket1\bra2+\frac23\ket2\bra1
+\frac43\ket2\bra2+\sum_{j=3}^{2k+2}\ket j\bra j,\\
C_1={}&\ket1\bra2-\ket2\bra1+2\ket2\bra2.
\end{aligned}
\]
For $j=1,\ldots,k$, define
\[
\begin{aligned}
N_{2j-1}&=\ket{2j+1}\bra2+\ket2\bra{2j+2},\\
N_{2j}&=\ket{2j+2}\bra2-\ket2\bra{2j+1}.
\end{aligned}
\]

For $-4/3<r<2/3$, let $s(r)>0$ be the positive solution of
\[
s(r)(s(r)+1)
=
\frac{(3r+160)(27r^3-54r^2+306r+1270)}{81(3r-2)^2}.
\]
For each $k\ge2$, choose $r_k\in(-4/3,2/3)$ satisfying
\[
2k\,s(r_k)
=
\frac{12(3r_k+7)(25+5r_k-3r_k^2)}{(3r_k-2)^2}.
\]
Such an $r_k$ exists because the difference between the two sides is positive as $r\to(-4/3)^+$ and tends to $-\infty$ as $r\to(2/3)^-$.
Set $s_k=s(r_k)$ and
\[
b_k
=
\frac{36r_k^2-306r_k-1270}{9(3r_k-2)}.
\]
Let $V=[\vket{C_0},\vket{C_1},\vket{N_1},\ldots,\vket{N_{2k}}]$ and 
$K_k
=
\begin{pmatrix}
1&r_k\\
r_k&b_k
\end{pmatrix}
\oplus s_kI_{2k}.$
Since $s_k>0$ and
\[
b_k-r_k^2
=
\frac{27r_k^3-54r_k^2+306r_k+1270}{9(2-3r_k)}
>0,
\]
we have $K_k>0$. 
Define
\[
\rho^{(2k+2)}
=
\frac{3}{35+6k+18b_k+8r_k+12ks_k}
VK_kV^\dagger.
\]

\subsection{Two-sided reduction}
We now verify that the states defined above satisfy both reduction inequalities.

First  we note that $\Tr_B\left(\vket{X}\vbra{Y}\right)=XY^\dagger,$ and $\Tr_A\left(\vket{X}\vbra{Y}\right)=X^T Y^*=(Y^\dagger X)^T.$

For $\rho^{(2k+1)}$, the definition of the $N_j$ implies
\[
\sum_{j=1}^{2k}N_jN_j^\dagger
=
\sum_{j=1}^{2k}N_j^\dagger N_j
=
\operatorname{diag}(1,\ldots,1,2k).
\]
Hence $\rho_A^{(2k+1)}=\rho_B^{(2k+1)}=\frac{1}{(2k+1+4kw)}D$, where
\[
D=\operatorname{diag}(1+w,\ldots,1+w,1+2kw).
\]
The vectors $(D^{-1/2}\otimes I_{2k+1})\vket{I_{2k+1}}$ and $\sqrt w(D^{-1/2}\otimes I_{2k+1})\vket{N_j}$ are mutually orthogonal. 
Their squared norms are
\[
\frac{2k}{1+w}+\frac1{1+2kw},\quad
w\left(\frac1{1+w}+\frac1{1+2kw}\right),
\]
respectively, and both are $1$ because $w^2-(2k-1)w-1=0$. 
Therefore
\begin{align*}
&(2k+1+4kw)
(D^{-1/2}\otimes I_{2k+1})
\rho^{(2k+1)}
(D^{-1/2}\otimes I_{2k+1}) \\
&\quad=
\bigl((\rho_A^{(2k+1)})^{-1/2}\otimes I_B\bigr)
\rho^{(2k+1)}
\bigl((\rho_A^{(2k+1)})^{-1/2}\otimes I_B\bigr)
\le I.
\end{align*}
This proves $\rho^{(2k+1)}\le\rho_A^{(2k+1)}\otimes I_{2k+1}$. 
Using $\sum_jN_j^\dagger N_j=\operatorname{diag}(1,\ldots,1,2k)$ in the same way gives $\rho^{(2k+1)}\le I_{2k+1}\otimes\rho_B^{(2k+1)}$.

For $\rho^{(4)}$, direct calculation yields
\[
2000\rho_A^{(4)}
=
\begin{pmatrix}
312&0&0&0\\
0&692&-95&0\\
0&-95&684&0\\
0&0&0&312
\end{pmatrix},
\]
\[
2000\rho_B^{(4)}
=
\begin{pmatrix}
312&0&0&0\\
0&684&-95&0\\
0&-95&692&0\\
0&0&0&312
\end{pmatrix}.   
\]
Substituting the $A_i$ and $G$ yields $\rho_A^{(4)}\otimes I_4\ge \rho^{(4)}$ and $I_4\otimes\rho_B^{(4)}\ge\rho^{(4)}$.

For $\rho^{(2k+2)}$,
$\rho^{(2k+2)}\propto\tilde{\rho}=VK_kV^\dagger$. 

Its marginals are
\[
\tilde\rho_A
=
\begin{pmatrix}
\frac{85}{9}+b_k-\frac43r_k&\frac{10}{9}+2b_k-3r_k\\
\frac{10}{9}+2b_k-3r_k&\frac{20}{9}+5b_k+4r_k+2ks_k
\end{pmatrix}
\oplus(1+s_k)I_{2k},
\]
and
\[
\tilde\rho_B
=
\begin{pmatrix}
\frac{85}{9}+b_k-\frac43r_k&-\left(\frac{10}{9}+2b_k-3r_k\right)\\
-\left(\frac{10}{9}+2b_k-3r_k\right)&\frac{20}{9}+5b_k+4r_k+2ks_k
\end{pmatrix}
\oplus(1+s_k)I_{2k}.
\]
Using the equations defining $r_k,s_k,b_k$, one obtains
\[
V^\dagger(\tilde\rho_A^{-1}\otimes I_{2k+2})V
=
V^\dagger(I_{2k+2}\otimes\tilde\rho_B^{-1})V
=
K_k^{-1}.
\]
Hence $(\tilde\rho_A^{-1/2}\otimes I_{2k+2})VK_k^{1/2}$ has orthonormal columns, and therefore
\[
(\tilde\rho_A^{-1/2}\otimes I_{2k+2})
\tilde\rho
(\widetilde\rho_A^{-1/2}\otimes I_{2k+2})
\le I.
\]
Thus $\rho^{(2k+2)}\le\rho_A^{(2k+2)}\otimes I_{2k+2}$.
Similarly, we have $\rho^{(2k+2)}\le I_{2k+2}\otimes\rho_B^{(2k+2)}$.

\subsection{Maximal Schmidt number}
\label{subsec:maximal-sn-red}

We now show that the three states constructed above have maximal Schmidt number.

For $\rho^{(2k+1)}$, every vector in its range can be written as $\vket{X}=a\vket{I_{2k+1}}+\sum_{j=1}^{2k}x_j\vket{N_j}$. 
Let $x=(x_1,\ldots,x_{2k})\in\C^{2k}$.
Then $\vket{X}=\vket{aI_{2k+1}+\sum_{j=1}^{2k}x_jN_j}$.

Writing $\ket{x}=\sum_{j=1}^{2k}x_j\ket j$ and $\bra{\tilde{x}}
=
\sum_{j=1}^{k}
\left(x_{2j-1}\bra{2j}-x_{2j}\bra{2j-1}\right)$, we have
$\sum_{j=1}^{2k}x_jN_j=\ket{x}\bra{2k+1}+\ket{2k+1}\bra{\widetilde{x}}$. 
Since $\bra{\tilde{x}}x\rangle=0$, we have $(\sum_{j=1}^{2k}x_jN_j)^3=0$.
Therefore
\[
\det(aI_{2k+1}+\sum_{j=1}^{2k}x_jN_j)=a^{2k+1}.
\]
Thus the matrix corresponding to $\vket{X}$ is singular exactly when $a=0$. Every vector of Schmidt rank at most $2k$ in $\operatorname{Ran}\rho^{(2k+1)}$ therefore belongs to $\operatorname{span}\{\vket{N_1},\ldots,\vket{N_{2k}}\}$, which is a proper subspace of the range because $\vket{I_{2k+1}}$ also belongs to the range. If $\SN(\rho^{(2k+1)})\le2k$, the vectors in a pure-state decomposition could not span the whole range. Hence $\SN(\rho^{(2k+1)})=2k+1$.

For $\rho^{(4)}$, since $G>0$, every vector in the range has the form $\vket{X(z)}=\sum_{i=0}^4z_i\vket{A_i}$, corresponding to the matrix $X(z)=\sum_{i=0}^4z_iA_i$. Writing $a=z_0/2$, direct calculation gives $\det X(z)=a^2(a^2-z_3z_4)$.

Define
\[
q(z)
=
|z_3+3z_4|^2
+2\operatorname{Re}\!\left[z_0^*(-\sqrt3\,z_3+3\sqrt3\,z_4)\right].
\]
If $a=0$, then $q(z)=|z_3+3z_4|^2\ge0$. If $a^2=z_3z_4$ and $a\ne0$, choose $x,y\in\C$ such that $z_3=x^2$, $z_4=y^2$, and $a=xy$. Then
\[
q(z)
=
\left(|x|^2-3|y|^2-2\sqrt3\operatorname{Re}(x^*y)\right)^2
\ge0.
\]
Thus $q(z)\ge0$ whenever $X(z)$ is singular.

If $\SN(\rho^{(4)})\le3$, then $\rho^{(4)}=\sum_\alpha\vket{X(z_\alpha)}\vbra{X(z_\alpha)}$ with every $X(z_\alpha)$ singular. 
Since the $A_i$ are linearly independent, $G=\sum_\alpha \ket{z_\alpha}\bra{z_\alpha}$.
Hence
\[
\begin{aligned}
\sum_\alpha q(z_\alpha)
={}&G_{33}+9G_{44}+6\operatorname{Re}G_{34}
-2\sqrt3\operatorname{Re}G_{03}
+6\sqrt3\operatorname{Re}G_{04}\\
={}&\frac{46-29\sqrt3}{100}<0,
\end{aligned}
\]
contradicting $q(z_\alpha)\ge0$. Therefore $\SN(\rho^{(4)})=4$.

Finally, for $\rho^{(2k+2)}$, every vector in the range can be written as $\vket{X}=a\vket{C_0}+c\vket{C_1}+\sum_{\ell=1}^{2k}y_\ell\vket{N_\ell}$. 
The corresponding matrix is $X=aC_0+cC_1+\sum_{\ell=1}^{2k}y_\ell N_\ell$, and direct calculation gives
\[
\det X
=
\frac{a^{2k}}9(3c+4a)(3c+10a).
\]
Hence $X$ is singular only if $a=0$, $c=-4a/3$, or $c=-10a/3$.
Define $q=-2|a|^2/3-\operatorname{Re}(a^*c)/2$. 
In these three cases, $q$ equals $0$, $0$, and $|a|^2$, respectively.
Thus $q\ge0$ whenever $X$ is singular.

If $\SN(\rho^{(2k+2)})\le2k+1$, then $\tilde{\rho}=\sum_\alpha\vket{X_\alpha}\vbra{X_\alpha}$ with every $X_\alpha$ singular.
Let $\ket{z_\alpha}=(a_\alpha,c_\alpha,y_{\alpha,1},\ldots,y_{\alpha,2k})$.
Since $C_0,C_1,N_1,\ldots,N_{2k}$ are linearly independent, $K_k=\sum_\alpha \ket{z_\alpha} \bra{z_\alpha}$. 
Since $r_k>-4/3$, $\sum_\alpha q(z_\alpha)
=
-\frac23-\frac{r_k}{2}<0$.

This contradicts $q(z_\alpha)\ge0$ for every singular $X_\alpha$. 
Hence $\SN(\rho^{(2k+2)})=2k+2$.

Combining the three cases, for every $d\ge3$ there exists $\rho\in\RED_A\cap\RED_B$ with $\SN(\rho)=d$.

\end{document}